\algnewcommand\algorithmicreturn{\textbf{return }}
\algnewcommand\RETURN{\State \algorithmicreturn}%
\newcommand{\norm}[1]{\left\lVert #1 \right\rVert}
\DeclarePairedDelimiterX{\ind}[1]{\mathbb{I}\{}{\}}{#1}
\DeclareMathOperator*{\argmin}{argmin}
\newcommand{\calU}{\mathcal{U}}
\newcommand{\calV}{\mathcal{V}}
\newcommand{\calO}{\mathcal{O}}
\newcommand{\R}{\mathbb{R}}
\newcommand{\w}{\mathbf{w}}
\newcommand{\bo}{\mathbf{o}}
\newcommand{\br}{\mathbf{r}}
\newcommand{\bfG}{\mathbf{G}}
\newcommand{\s}{\mathbf{s}}
\newcommand{\bt}{\mathbf{t}}
\newcommand{\bR}{\mathbf{R}}
\newcommand{\I}{\mathbf{I}}
\newcommand{\bfH}{\mathbf{H}}
\newcommand{\bfU}{\mathbf{U}}
\newcommand{\bfV}{\mathbf{V}}
\newcommand{\1}{\mathds{1}}
\newcommand{\bu}{\mathbf{u}}
\newcommand{\bv}{\mathbf{v}}
\newcommand{\bLambda}{\mathbf{\Lambda}}
\newcommand{\ML}{\textit{ML-20M}\xspace}
\newcommand{\MSD}{\textit{MSD}\xspace}
\crefname{lemma}{Lemma}{Lemmas}
\crefname{theorem}{Theorem}{Theorems}
\crefname{claim}{Claim}{Claims}
\crefname{remark}{Remark}{Remarks}
\crefname{observation}{Observation}{Observations}
\crefname{corollary}{Corollary}{Corollaries}
\crefname{appendix}{Appendix}{Appendices}
\crefname{section}{Section}{Sections}
\crefname{algorithm}{Algorithm}{Algorithms}
\crefname{equation}{Eq.}{Eqs.}
\crefname{figure}{Figure}{Figures}
\crefname{table}{Table}{Tables}
\begin{document}
\title[Fair Matrix Factorisation for Large-Scale Recommender Systems]{Fair Matrix Factorisation for Large-Scale Recommender Systems}


\author{Riku Togashi}
\affiliation{
  \institution{CyberAgent, Inc.}
  \city{Tokyo} 
  \country{Japan}
}
\email{rtogashi@acm.org}

\author{Kenshi Abe}
\affiliation{
  \institution{CyberAgent, Inc.}
  \city{Tokyo} 
  \country{Japan}
}
\email{abe_kenshi@cyberagent.co.jp}

\renewcommand{\shortauthors}{
}
\makeatletter
\newcommand{\quickwordcount}[1]{%
  \immediate\write18{texcount -1 -sum -merge #1.tex > #1-words}
  \immediate\openin\somefile=#1-words%
  \read\somefile to \@@localdummy%
  \immediate\closein\somefile%
  \setcounter{wordcounter}{\@@localdummy}%
  \@@localdummy%
}
\makeatother

\begin{abstract}
Recommender systems are hedged with various requirements, such as ranking quality, optimisation efficiency, and item fairness.
Item fairness is an emerging yet impending issue in practical systems.
The notion of item fairness requires controlling the opportunity of items (e.g. the exposure) by considering the entire set of rankings recommended for users.
However, the intrinsic nature of fairness destroys the separability of optimisation subproblems for users and items,
which is an essential property of conventional scalable algorithms, such as implicit alternating least squares (iALS).
Few fairness-aware methods are thus available for large-scale item recommendation.
Because of the paucity of simple tools for practitioners, unfairness issues would be costly to solve or, at worst, would be abandoned.
This study takes a step towards solving real-world unfairness issues by developing a simple and scalable collaborative filtering method for fairness-aware item recommendation.
We built a method named \emph{fiADMM}, which inherits the scalability of iALS and maintains a provable convergence guarantee.
\end{abstract}

\keywords{
recommender systems;
collaborative filtering;
fairness;
efficiency;
}

\maketitle
\section{Introduction}
Modern recommender systems have rather complex responsibilities, such as accountability, transparency, and fairness.
Considering that users often have dual roles as stakeholders (e.g. consumers and producers)~\cite{burke2017multisided,abdollahpouri2020multistakeholder},
fairness for items has become part of the overall user utility, and is also a social responsibility.
However, optimising recommender systems while considering item fairness is a challenge.
Apart from this responsibility, systems must comply with internal requirements, i.e. ranking quality and computational efficiency.
Implementing a practical system thus entails finely balancing these requirements and responsibilities.
Computational efficiency is particularly critical because it is a prerequisite for implementation.

To date, considerable research effort has been devoted to the development of scalable item recommendation~\cite{hu2008collaborative,zhou2008large,yu2014distributed,he2016fast,bayer2017generic}.
Scalability is a major challenge in maximising user utility in the sense of ranking quality, which is the primary responsibility of recommender systems.
\emph{Implicit alternating least squares (iALS)}~\cite{hu2008collaborative}\footnote{Following \citet{rendle2021revisiting}, we use the term iALS to refer to the method (including the objective and optimisation algorithm) proposed by \citet{hu2008collaborative}.} is the most efficient collaborative filtering method based on matrix factorisation (MF).
Even after a decade since its emergence, iALS is still competitive in terms of ranking quality with its unrivalled scalability~\cite{rendle2020neural,rendle2021revisiting}.
The key to its scalability is \emph{optimisation separability}, realised by its pointwise loss function and alternating optimisation strategy.
That is, when item latent factors are fixed, the optimisation problem for each user factor is an independent (i.e. embarrassingly parallelisable) linear regression with a closed-form solution.
The optimisation efficiency of iALS is fascinating, making it irreplaceable for large-scale applications.

From the viewpoint of optimisation separability, item fairness constraints in top-$K$ ranking problems are \emph{intrinsically} problematic.
Item fairness requires restrictions on all users and items,
because it involves the uniformity of exposure allocation to items under the limited budget of users' top-$K$ results.
This is a major distinction between fairness-aware item recommendation and fairness-agnostic settings---\emph{optimal item rankings for users depend on each other.}
However, this intrinsic dependency inevitably destroys the optimisation separability, and thus the computational cost of most conventional methods is prohibitively large, a point we review in this paper.

This work aims to develop a fairness-aware method of which the optimisation efficiency is comparable to that of iALS.
Our aim is to provide an easy-to-use and scalable tool to solve immediate unfairness issues in real-world applications.
To this end, we devise a variant of iALS to learn MF models by considering item fairness while maintaining scalability.
We first propose a tractable fairness regulariser, which remains challenging to optimise due to the intrinsic optimisation non-separability.
We then develop an algorithm to optimise the fairness-aware MF without sacrificing scalability, even under the proposed regulariser.
Furthermore, despite the non-convex and multi-block optimisation in the proposed objective,
we provide a convergence guarantee for our proposed algorithm based on the alternating direction method of multipliers (ADMM)~\cite{boyd2011distributed}.

\section{Preliminary}
\subsection{Implicit Alternating Least Squares (iALS)}
We first review iALS~\cite{hu2008collaborative} for discussing the inefficiency issues in conventional fairness-aware ranking methods.

Given users $\mathcal{U}=[|\calU|]$ and items $\mathcal{V}=[|\calV|]$,
let $\bR \in \{0, 1\}^{|\calU| \times |\calV|}$ be an implicit feedback matrix whose $(i,j)$-element has the value of 1 when user $i \in \calU$ has interacted with item $j \in \calV$ and otherwise 0; we represent the number of observed interactions by that of non-zero entries in $\bR$, that is, $\mathrm{nz}(\bR)$.
The model parameters of iALS are the $d$-dimensional latent factors $\bfU \in \R^{|\calU|\times d}$ and $\bfV \in \R^{|\calV|\times d}$ for the users and items, respectively.
These parameters are estimated by minimising the loss function of iALS, which is defined as follows:
\begin{align}
    L(\bfV, \bfU) &= \frac{1}{2}\norm{\bR \odot (\bR - \bfU\bfV^\top)}_F^2 + \frac{\alpha_0}{2}\norm{\bfU\bfV^\top}_F^2 \nonumber\\
    &+ \frac{1}{2}\norm{\bLambda_U^{1/2} \bfU}_F^2 + \frac{1}{2}\norm{\bLambda_V^{1/2}\bfV}_F^2,
    \label{eq:obj-imf}
\end{align}
where Operator $\odot$ is the Hadamard element-wise product, and
the second term is the implicit regulariser~\cite{bayer2017generic}, which is the L2 norm of the recovered score matrix $\bfU\bfV^\top$.
For the implicit regulariser, we use a weight parameter $\alpha_0 > 0$.
In L2 regularisation, $\bLambda_U \in \R^{|\calU| \times |\calU|}$ and $\bLambda_V \in \R^{|\calV| \times |\calV|}$ are diagonal matrices representing the per-coordinate weights for user and item factors.
It is well known that ranking performance can often be improved by using weights that depend on the number of interactions for each user and item~\cite{hu2008collaborative,rendle2021revisiting}.
Let $\br_{i}$ and $\br_{:,j}$ be the (column) vectors that correspond to the $i$-th row and $j$-th column of $\bR$, respectively.
The frequency-based strategy sets the weights with base weight $\lambda_2>0$ and exponent $\eta \geq 0$ as follows:
\begin{align*}
  (\bLambda_U)_{i,i} = \lambda_2 \left(\norm{\br_{i}}_1 + \alpha_0|\calV|\right)^{\eta}, \,\,\,(\bLambda_V)_{j,j} = \lambda_2 \left(\norm{\br_{:,j}}_1 + \alpha_0|\calU|\right)^{\eta}.
\end{align*}
Hereafter, we denote $\lambda_U^{(i)}=(\bLambda_U)_{i,i}$ and $\lambda_V^{(j)}=(\bLambda_V)_{j,j}$.

iALS solves the minimisation problem $\min_{\bfV,\bfU} L(\bfV, \bfU)$ by alternating the optimisation with respect to $\bfV$ and $\bfU$.
Specifically, in the $k$-th step, iALS updates $\bfV$ and $\bfU$ as follows:
\begin{align*}
  \bfU^{k+1} &= \argmin_{\bfU} \|\bR \odot (\bR - \bfU(\bfV^{k})^\top)\|_F^2 + \alpha_0\|\bfU(\bfV^{k})^\top\|_F^2 + \|\bLambda_U^{1/2}\bfU\|_F^2, \\
  \bfV^{k+1} &= \argmin_{\bfV}\|\bR \odot (\bR - {\bfU^{k+1}}\bfV^\top)\|_F^2 + \alpha_0\|{\bfU^{k+1}}\bfV^\top\|_F^2 + \|\bLambda_V^{1/2}\bfV\|_F^2.
\end{align*}
Owing to the alternating strategy, the optimisation for $\bfU$ and $\bfV$ can be divided into independent convex problems with respect to each row of $\bfU$ and $\bfV$.
Suppose that $\bu_i \in \R^d$ is the (column) vector that corresponds to the $i$-th row of $\bfU$.
Then, the update of $\bu_i$ is the following row-wise independent problem:
\begin{align*}
  \bu_i &= \argmin_{\bu_i} \norm{\br_i \odot (\br_i - \bfV\bu_i)}_2^2 + \alpha_0\norm{\bfV\bu_i}_2^2 + \lambda_U^{(i)}\norm{\bu_i}_2^2 \\
  &= \left(\sum_{j \in \calV}r_{i,j} \bv_j \bv_j^\top + \alpha_0\bfG_V + \lambda_U^{(i)}\I\right)^{-1}\sum_{j \in \calV}r_{i,j}\bv_j,
\end{align*}
where $\bfG_V = \sum_{j \in \calV}\bv_j {\bv_j}^\top = \bfV^\top\bfV$ is the Gramian matrix of item latent factors, where $\bv_j \in \R^{d}$ denotes the column vector that corresponds to the $j$-th row of $\bfV$.
When $\bfG_V$ is pre-computed, the expected computational cost for each subproblem is reduced to $\calO((\mathrm{nz}(\bR)/|\calV|)d^2 + d^3)$ (a.k.a. the Gramian trick~\cite{rendle2021revisiting}), which is realised by (1) computing the Gramian for the interacted items $\sum_{j \in \calV}r_{i,j}\bv_j\bv_j^\top$ in $\calO((\mathrm{nz}(\bR)/|\calV|)d^2)$ and by (2) solving the linear system $\bfH\bu_i\!=\!\sum_{i \in \calU}r_{i,j}\bv_j$, where $\bfH=\sum_{j \in \calV}r_{i,j}\bv_j\bv_j^\top\!+\!\alpha_0 \bfG_V \!+\!\lambda_U^{(i)}\I$ in $\calO(d^3)$.
Because the update of $\bfV$ is analogous to that of $\bfU$,
the overall cost of updating $\bfU$ and $\bfV$ is $\calO(\mathrm{nz}(\bR)d^2 + (|\calU|+|\calV|)d^3)$.
This is much faster than $\calO(|\calU||\calV|d^2 + (|\calU|+|\calV|)d^3)$ owing to feedback sparsity $\mathrm{nz}(\bR)\ll|\calU||\calV|$.

In summary, iALS retains scalability, despite its objective involves \emph{all} user-item pairs due to the implicit regulariser.
The crux is that iALS avoids the intractable factor $\calO(|\calU||\calV|)$ owing to the Gramian trick and feedback sparsity.

\subsection{Inefficiency Issue in Fair Ranking}
\label{section:inefficiency-in-fair-ranking}
Considering the above discussion, we here review the inefficiency issue with conventional fairness-aware ranking methods without any distinction among the types of fairness (e.g. user/item or group/individual).

Numerous studies have adopted an approach to learn fair probabilistic ranking policies based on \emph{given} preferences~\cite{biega2018equity,singh2018fairness,memarrast2021fairness,do2021two,wu2021multi,do2022optimizing,saito2022fair}.
The optimisation is often formulated as a convex optimisation on $|\calV| \times |\calV|$ (or $|\calV| \times K$ for top-$K$ ranking) doubly stochastic matrices for each user with fairness constraints.
This approach may not apply to realistic recommender systems owing to the $\calO(|\calU||\calV|^2)$ (or $\calO(|\calU||\calV|K)$) space complexity for the parameters to be optimised.
Scalability can be enhanced by reformulating the subproblem for each user as an ADMM~\cite{boyd2011distributed}, which iteratively solves the local optimisation problems for two $|\calV| \times |\calV|$ row-wise or column-wise stochastic matrices and a $|\calV| \times |\calV|$ dual variable\footnote{For details, see the appendix of \citet{memarrast2021fairness}.}.
To avoid the $\calO(|\calU||\calV|^2)$ cost,
recent methods~\cite{do2021two,usunier2022fast,do2022optimizing} rely on the Frank-Wolfe algorithm~\cite{Frank1956AnAF,jaggi2013revisiting}, which requires top-$K$ sorting of items for each user at each iteration, resulting in a computational cost of $O(|\calU||\calV| \log K)$ per epoch; this is still prohibitively large in real-world applications.
\citet{patro2020fairrec} proposed the greedy-round-robin algorithm, which also does not scale well because its round-robin scheduling is not parallelisable with respect to users.
This \emph{post-processing} approach requires, a priori, a $|\calU|\times|\calV|$ (dense) preference matrix (e.g. $\bfU\bfV^\top$ of an MF model), which is costly to retain in the memory space and even impossible to explicitly compute owing to the cost of $\calO(|\calU||\calV|C)$, where $C$ is the cost for predicting a single user-item pair (e.g. $C=d$ for MF).
Therefore, the post-processing approach cannot exploit feedback sparsity, leading to the computational cost of $O(|\calU||\calV|)$.
It should also be noted that the fairness guarantees proved in the previous studies hold when the true examination probabilities for all rank positions and true preferences for all user-item pairs are known; hence, the guarantees do not hold in practise because only estimates with errors are usually accessible.

In contrast to the post-processing approach, various studies have explored its \emph{in-processing} counterpart in which a single model is trained to optimise its ranking quality and fairness simultaneously~\cite{kamishima2011fairness,kamishima2013efficiency,yao2017beyond,burke2018balanced,singh2019policy,zehlike2020reducing,morik2020controlling,yadav2021policy,oosterhuis2021computationally}.
Most methods are designed for re-ranking tasks in information retrieval, where the number of items (documents) is typically small.
To represent a stochastic ranking policy,
several studies in the context of information retrieval~\cite{singh2019policy,yadav2021policy,oosterhuis2021computationally} use the Placket-Luce model~\cite{plackett1975analysis}, of which the cost is $\calO(|\calU||\calV|K)$ per epoch.
Moreover, mini-batch stochastic gradient descent (SGD) is adopted for optimising the multiple objectives of ranking and fairness.
Although mini-batch SGD allows us to use flexible models/objectives and reduces the computational cost in a single step, it is disadvantaged by slow convergence.
By contrast, \citet{burke2018balanced} proposed fairness-aware variants of SLIM~\cite{ning2011slim}.
Because their group-based fairness regularisers retain the optimisation separability with respect to the rows (or columns) of the weight matrix as in the original SLIM,
their approach enables parallel optimisation based on coordinate descent.
However, extending their group-wise approach to individual item fairness is non-trivial.

In this study, we develop a collaborative filtering method with \emph{individual} item fairness for large-scale applications with many users and a large item catalogue.
We take an in-processing approach and build a method based on iALS to inherit its advantages in ranking quality and scalability. 
The downside of such an approach is that there is no guarantee of the properties of interest in conventional studies, e.g. envy-freeness~\cite{patro2020fairrec,do2022online,saito2022fair} and Lorenz efficiency~\cite{do2021two}.
This study explores the possibility of developing a practical algorithm by trading theoretical properties/guarantees for feasibility.

\section{Proposed Method}
\subsection{Problem Setting}
To realise a scalable and fairness-aware method, 
we consider the minimisation problem of the iALS loss $L(\bfV, \bfU)$ with fairness regularisation as follows:
\begin{align}
    \min_{\bfV, \bfU} L(\bfV, \bfU) + \lambda_f R_f(\bfV, \bfU),
    \label{eq:obj-imf-fair}
\end{align}
where $R_f(\bfV, \bfU)$ is a regularisation term to induce item fairness, and $\lambda_f$ is the weight parameter to balance ranking quality and item fairness.
As we discussed above, the scalability of iALS relies on the simplicity of the objective.
To retain this desirable property,
we develop a tractable fairness regulariser $R_f$.

Measures of unfairness and inequality are often based on \emph{variability}.
The Gini index (or Gini mean difference) is a widely utilised measure of inequality and is defined as follows:
\begin{align}
  \label{eq:gini-index}
  \text{Gini}(\bo) = \frac{1}{2\norm{\bo}_1|\calV|^2}\sum_{j \in \calV}\sum_{l \in \calV}|o_j - o_l|,
\end{align}
where $\bo \in \R^{|\calV|}$ is an $|\calV|$-dimensional vector, whose $j$-th element $o_j$ indicates the utility (e.g. exposure) of item $j$.
In contrast to non-differentiable Gini indices, various measures of inequality have been explored, e.g. the standard deviation~\cite{do2021two}.

Optimising fairness-agnostic item rankings also introduce dependency between items in the optimisation for each (independent) user, as it involves the relative order between items.
iALS optimises item rankings in the space of preference scores rather than that of rankings or a probability simplex for efficiency; this is one aspect of its optimisation separability with respect to items.
Based on these virtues of iALS, we design a surrogate measure of exposure inequality based on the variability of the predicted item merit (i.e. the average preference of each item).
Denoting the predicted score for user $i$ and item $j$ by $\hat{r}_{i,j}=(\bfU\bfV^\top)_{i,j}$,
we consider the squared L2 norm of the predicted item merit as a measure of inequality:
\begin{align*}
  R_f(\bfV, \bfU)
  &= \frac{1}{2}\sum_{j \in \calV}\left(\frac{1}{|\calU|}\sum_{i \in \calU}\hat{r}_{i,j}\right)^2 
  = \frac{1}{2}\norm{\frac{1}{|\calU|}\bfV\bfU^\top\1}_2^2,
\end{align*}
where $\1$ is the $|\calU| \times 1$ column vector of which the elements are all 1.
Observe that $R_f$ takes a large value for items of which the average predicted scores are either extremely large or small.
It is differentiable and hence easier to optimise than non-differentiable measures such as the Gini index.
However, unfortunately, optimising $R_f$ is still not straightforward in large-scale settings because it destroys the optimisation separability with respect to the rows of $\bfU$ owing to the average user factor $(1/|\calU|)\bfU^\top\1$.

\subsection{Algorithm}

\subsubsection{Alternating Direction Method of Multipliers}
To enable parallel optimisation with respect to users,
we adopt an approach based on ADMM,
which is an optimisation framework with high parallelism~\cite{boyd2011distributed} and has been adopted for scalable recommender systems~\cite{yu2014distributed,cheng2014lorslim,smith2017constrained,ioannidis2019coupled,steck2020admm,steck2021negative}.
To decouple the row-wise and column-wise dependencies in $\bfU$ introduced by the fairness regulariser $(1/2)\|(1/|\calU|)\bfV\bfU^\top\1\|_2^2$,
we first reformulate the optimisation problem by introducing an auxiliary variable $\s \in \R^d$ as follows:
\begin{align}
\label{eq:fiadmm-obj}
  \min_{\bfV,\bfU,\s} &L(\bfV,\! \bfU) + \frac{\lambda_f}{2}\norm{\bfV\s}_2^2,\,\,\,\, \text{s.t.~} \s = \frac{1}{|\calU|}\bfU^\top\1.  
\end{align}
Here, we replaced $(1/|\calU|)\bfU^\top\1$ in the fairness regulariser with $\s$ while introducing an additional linear equality constraint.

Following ADMM, this can be further reformulated to the following saddle-point optimisation:
\begin{align*}
  \min_{\bfV,\bfU,\s} \max_{\w}L_\rho(\bfV, \bfU, \s, \w),
\end{align*}
where
\begin{align*}
L_\rho(\bfV, \bfU, \s, \w) \!&=\! L(\bfV, \bfU) \!+\! \frac{\lambda_f}{2}\norm{\bfV\s}_2^2 \!+\! \frac{\rho}{2}\norm{\frac{1}{|\calU|}\bfU^\top\1 \!-\! \s \!+\! \w}_2^2 \!-\! \frac{\rho}{2}\|\w\|_2^2.
\end{align*}
Here, $L_\rho$ is the Lagrangian augmented by the penalty term with weight $\rho>0$, and $\w \in \R^{d}$ is the dual variable (i.e. Lagrange multipliers) scaled by $1/\rho$.
Optimisation in the $(k+1)$-th step is performed by alternately updating each variable as follows:
\begin{align*}
  \bfV^{k+1} &= \argmin_\bfV  L_\rho(\bfV, \bfU^{k}, \s^{k}, \w^{k}), \\
  \bfU^{k+1} &= \argmin_\bfU  L_\rho(\bfV^{k+1}, \bfU, \s^{k}, \w^{k}), \\
  \s^{k+1} &= \argmin_\s  L_\rho(\bfV^{k+1}, \bfU^{k+1}, \s, \w^{k}), \\
  \w^{k+1} &= \w^{k} + \frac{1}{|\calU|}(\bfU^{k+1})^\top\1 - \s^{k+1},
\end{align*}
The update of $\w$ corresponds to the gradient ascent with respect to the dual problem $\max_{\w}\min_{\bfU,\bfV,\s}L_\rho(\bfV,\bfU,\s,\w)$ with step size $\rho$~\cite{boyd2011distributed}.

\subsubsection{Update of $\bfV$}
We derive the update of $\bfV$ in the $(k+1)$-th step, which comprises independent optimisation problems with respect to the rows of $\bfV$.
Suppose that $\bv_j^{k+1} \in \R^{d}$ and $\br_{:, j} \in \{0,1\}^{|\calU|}$ are the column vectors indicating the $j$-th row of $\bfV^{k+1}$ and the $j$-th column of $\bR$, respectively.
The update is then performed by solving the following linear system:
\begin{align*}
   \bv_{j}^{k+1} &=\argmin_{\bv_j}  \Bigl\{\frac{1}{2}\norm{\br_{:, j} \odot (\br_{:, j} - \bfU^{k}\bv_j)}_2^2 + \frac{\alpha_0}{2}\norm{\bfU^{k}\bv_j}_2^2 \\
   &\,\,\,\,\,\,\,\,\,\,\,\,\,\,\,\,\,\,\,\,\,+ \frac{\lambda_V^{(j)}}{2}\norm{\bv_j}_2^2 + \frac{\lambda_f}{2}\left(\bv_j^\top \s^{k}\right)^2\Bigr\}\\
  &= \left(\sum_{i \in \calU}r_{i,j} \bu_i^{k} ({\bu_i^{k}})^\top + \alpha_0\bfG_U^{k} + \lambda_f \s^{k}(\s^{k})^\top + \lambda_V^{(j)}\I\right)^{-1}\sum_{i \in \calU}r_{i,j}\bu_i^{k},
\end{align*}
where $\bfG_U^{k}=\sum_{i \in \calU} \bu_i^{k}{\bu_i^{k}}^\top$ is the Gramian of the user factors in the $k$-th step.
Notably, we can pre-compute $\bfG_U^{k}$ and $\s^{k}(\s^{k})^\top$, and the update of $\bfV$ achieves the same complexity as that of iALS.

\subsubsection{Update of $\bfU$}
Updating $\bfU$ is the most intricate part of our algorithm.
At the $(k\!+\!1)$-th step, our aim is to solve the following problem:
\begin{align*}
   \bfU^{k+1} \!=\! \argmin_{\bfU} &\Biggl\{ L(\bfV^{k+1}\!, \bfU) \!+\! \frac{\lambda_f}{2}\|\bfV^{k+1}\s^k\|_2^2 \!+\! \frac{\rho}{2}\norm{\frac{1}{|\calU|}\bfU^\top\1 \!-\! \s^{k} \!+\! \w^{k}}_2^2\Biggr\}.
\end{align*}
Observe that the penalty term of ADMM (the fourth term of RHS) destroys the independence between the rows of $\bfU$.
We resolve this using a proximal gradient method (a.k.a. forward-backward splitting)~\cite{rockafellar1976monotone,duchi2009efficient,liu2019linearized}.
We consider a linear approximation (i.e. the first-order Taylor expansion around the current estimate $\bfU^{k}$) of the objective except for the ADMM penalisation.
This yields the following approximated objective:
\begin{align*}
   \bfU^{k+1} = \argmin_{\bfU} &\Biggl\{ \langle \bfU - \bfU^{k}, \nabla_{\bfU}g(\bfV^{k+1}, \bfU^{k}, \s^{k}) \rangle_F + \frac{1}{2\gamma}\norm{\bfU - \bfU^{k}}_F^2 \\
   &+ \frac{\rho}{2}\norm{\frac{1}{|\calU|}\bfU^\top\1 - \s^{k} + \w^{k}}_2^2\Biggr\}
\end{align*}
where
\begin{align*}
   g(\bfV,\bfU,\s) &= \frac{1}{2}\norm{\bR \odot (\bR - \bfU\bfV^\top)}_F^2 + \frac{\alpha_0}{2}\norm{\bfU\bfV^\top}_F^2  + \frac{\lambda_f}{2}\norm{\bfV\s}_2^2 \\
   &+ \frac{1}{2}\norm{\bLambda_U^{1/2} \bfU}_F^2 + \frac{1}{2}\norm{\bLambda_V^{1/2} \bfV}_F^2.
\end{align*}
We here introduce a regularisation term $(1/2\gamma)\lVert\bfU - \bfU^{k}\rVert_F^2$, which is referred to as the proximal term~\cite{rockafellar1976monotone}.
By completing the square, the above objective can be rearranged into the following parallel and non-parallel computing steps: 
\begin{align*}
   \bfU^{k+1} &= \argmin_{\bfU} \frac{\rho}{2}\norm{\frac{1}{|\calU|}\bfU^\top\1 \!-\! \s^{k} \!+\! \w^{k}}_2^2 + \frac{1}{2\gamma}\norm{\bfU \!-\! \left(\bfU^{k}\!-\!\gamma\nabla_{\bfU}g^{k}\right)}_F^2 \\
   & = \underbrace{\mathrm{prox}_{\gamma}^{k}}_{\text{non-parallel}}\!\!( \underbrace{\vphantom{\mathrm{prox}_{\gamma}^{k}}\bfU^{k}-\gamma\nabla_{\bfU}g^{k}}_{\text{parallel}}),
\end{align*}
where
\begin{align*}
    \mathrm{prox}_{\gamma}^{k}(\widetilde{\bfU}) &= \argmin_{\bfU} \frac{\rho}{2}\norm{\frac{1}{|\calU|}\bfU^\top\1 - \s^k + \w^k}_2^2 + \frac{1}{2\gamma}\norm{\bfU-\widetilde{\bfU}}_F^2\\
    &= \left(\frac{\rho}{|\calU|^2} \1\1^\top + \frac{1}{\gamma}\I\right)^{-1}\left(\frac{1}{\gamma}\widetilde{\bfU} + \frac{\rho}{|\calU|}\1(\s^{k} - \w^{k})^\top\right).
\end{align*}
Here, $\nabla_{\bfU}g^{k}$ is used to represent $\nabla_{\bfU}g(\bfV^{k+1}, \bfU^{k}, \s^{k})$.
Notice here that $\bfU^{k}-\gamma\nabla_{\bfU}g^{k}$ corresponds to a gradient descent of the iALS objective\footnote{Note that $\nabla_{\bfU}g(\bfV,\bfU,\s)$ is equivalent to the derivative of the iALS objective with respect to $\bfU$ because we can ignore the constant fairness regulariser $(\lambda_f/2)\|\bfV\s\|_2^2$.} with a learning rate $\gamma$.
Therefore, $\bfU$ can be updated in two row-wise parallel and non-parallel steps, that is, (1) gradient descent $\widetilde{\bfU}^{k+1}=\bfU^{k}-\gamma\nabla_{\bfU}g^{k}$ and (2) proximal mapping $\bfU^{k+1}=\mathrm{prox}_{\gamma}^{k}(\widetilde{\bfU}^{k+1})$.

\noindent\textbf{Parallel Gradient Computation.}
The gradient $\nabla_{\bfU}g(\bfV^{k+1}, \bfU^{k}, \s^{k})$ can be independently computed for each row of $\bfU$ as follows:
\begin{align*}
   \nabla_{\bu_i}g^{k} 
   &= \left(\sum_{j \in \calV} r_{i,j} {\bv_j}^{k+1}{\bv_j^{k+1}}^\top + \alpha_0\bfG_V^{k+1} + \lambda_U^{(i)} \I\right)\bu_i^{k} - \sum_{j \in \calV}r_{i,j}\bv_j^{k+1}
\end{align*}
Similar to iALS, we can efficiently compute the gradient by pre-computing the Gramian $\bfG_V^{k+1}=\sum_{j \in \calV} {\bv_j^{k+1}}(\bv_j^{k+1})^\top$.
Therefore, the gradient descent $\bfU^k - \nabla_{\bfU}g^k$ can be performed in parallel with respect to users.
Notably, we can avoid the computation of the inverse Hessian in $\calO(d^3)$ unlike the $\bfU$ step of iALS.

\noindent\textbf{Efficient Proximal Mapping.}
The proximal mapping step requires inversion of the $|\calU| \times |\calU|$ matrix, the computational complexity of which is $\calO(|\calU|^3)$ for a na\"ive computation.
This is problematic because, in practise, $\rho$ and $\gamma$ may increase/decrease during the iterations~\cite{boyd2011distributed}.
However, we can efficiently compute an inverse matrix.
The Sherman-Morrison formula~\cite{sherman1950adjustment} (a special case of the Woodbury matrix identity~\cite{woodbury1950inverting}) yields the following matrix inverse:
\begin{align*}
   \left(\frac{\rho}{|\calU|^2}\1\1^\top + \frac{1}{\gamma}\I\right)^{-1} &= 
   -\frac{(\gamma\I)(\rho/|\calU|^2)\1\1^\top(\gamma\I)}{1 + (\rho/|\calU|^2)\1^\top (\gamma \I) \1} + \gamma \I \\
   &= \gamma\left(-\frac{\rho}{|\calU|^2\left(\frac{1}{\gamma} + \frac{\rho}{|\calU|}\right)}\1\1^\top + \I\right).
\end{align*}
Therefore, the proximal mapping $\mathrm{prox}_{\gamma}^k$ can be obtained as the following closed-form solution:
\begin{align*}
\mathrm{prox}_{\gamma}^{k}(\bfU) 
&= \left(\frac{-\rho}{|\calU|^2\left(\frac{\rho}{|\calU|} + \frac{1}{\gamma}\right)}\1\1^\top+\I \right)\left(\bfU + \frac{\rho\gamma}{|\calU|}\1(\s^{k} - \w^{k})^\top\right).
\end{align*}
The na\"ive computation of $\mathrm{prox}_{\gamma}^k$ is still computationally costly owing to the multiplication of $|\calU|\times|\calU|$ and $|\calU| \times d$ matrices in $\calO(|\calU|^2d)$.
However, this matrix multiplication can be efficiently performed by (1) computing $\widehat{\bfU}^{k+1}=\widetilde{\bfU}^{k+1} + \frac{\rho\gamma}{|\calU|}\1(\s^k - \w^k)^\top$ in parallel with respect to each row of $\widehat{\bfU}^{k+1}$, (2) summing up the rows of $\widehat{\bfU}^{k+1}$ by $\widehat{\bu}^{k+1} = (\widehat{\bfU}^{k+1})^\top\1$, and then (3) adding $c \cdot \widehat{\bu}^{k+1}$ to each row of $\widehat{\bfU}^{k+1}$ where $c=-\rho(|U^2|(\nicefrac{\rho}{|\calU|}+\nicefrac{1}{\gamma}))^{-1}$.
Thus, the computational cost of this step is $\calO(|\calU|d)$, which is more efficient than $\calO(|\calU|^2d)$ of na\"ive multiplication.
This computational efficiency is advantageous even when $\rho$ and $\gamma$ are fixed during optimisation.

\subsubsection{Update of $\s$.}
The update of $\s$ is performed by computing the closed-form solution to the following problem:
\begin{align*}
   \s^{k+1} &= \argmin_{\s} \left\{ \frac{\lambda_f}{2}\norm{\bfV^{k+1}\s}_2^2 + \frac{\rho}{2}\norm{\frac{1}{|\calU|}(\bfU^{k+1})^\top\1 - \s + \w^{k}}_2^2\right\} \\
   &= \rho\left(\lambda_f \bfG_V^{k+1} + \rho\I\right)^{-1}\left(\frac{1}{|\calU|}(\bfU^{k+1})^\top \1 + \w^{k}\right).
\end{align*}
The Gramian $\bfG_V^{k+1}$ can be reused for this step following the pre-computation in the $\bfU$ step.
The cost of this step is thus $\calO(|\calU|d + d^3)$, which includes (1) the computation of $(1/|\calU|)(\bfU^{k+1})^\top \1$ and (2) the solution of a linear system of size $d^2$.

\begin{algorithm}[th]
  \caption{Fair Implicit ADMM}
  \small
  \label{alg:fiadmm}
  \begin{algorithmic}[1]
    \REQUIRE{Feedback matrix $\bR$}
    \STATE $\bu_{i}^{(0)} \sim \mathcal{N}(0, (\sigma/\sqrt{d})\I)$ for $\forall i \in \calU$
    \STATE $\bv_{j}^{(0)} \sim \mathcal{N}(0, (\sigma/\sqrt{d})\I)$ for $\forall j \in \calV$
    \STATE $\s^{(0)} \gets (1/|\calU|)(\bfU^{(0)})^\top\1$, $\w^{(0)} \gets \vec{0}$
    \FOR{$k=0,\dots,T-1$} 
      \STATE $\bfG_U^{k} \gets \sum_{i \in \calU} \bu_i^{k}{\bu_i^{k}}^\top$  \hfill // \textit{$\calO(|\calU|d^2)$}
      \STATE $\bfG_s^{k} \gets \s^{k}{\s^{k}}^\top$  \hfill // \textit{$\calO(d^2)$}
      \FOR{$j=1,\dots,|\calV|$} \hfill // \textit{\textbf{parallelisable loop}}
        \STATE $\bfG_{j}^{k} \gets \sum_{i \in \calU} r_{i,j} \bu_i^{k}{\bu_i^{k}}^\top$ \hfill // \textit{$\calO((\mathrm{nz}(\bR)/|\calV|)d^2)$}
        \STATE $\bv_j^{k+1} \gets \left(\bfG_j^{k} +\alpha_0\bfG_U^{k} + \lambda_f \bfG_s^{k} + \lambda_V^{(j)}\I\right)^{-1} \sum_{i \in \calU} r_{i,j}\bu_i^{k}$ \hfill// \textit{$\calO(d^3)$}
      \ENDFOR
      \STATE $\bfG_V^{k+1} \gets \sum_{j \in \calV} \bv_j^{k+1}{\bv_j^{k+1}}^\top$  \hfill // \textit{$\calO(|\calV|d^2)$}
      \FOR{$i=1,\dots,|\calU|$} \hfill // \textit{\textbf{parallelisable loop}}
        \STATE $\bfG_i^{k+1} \gets \sum_{j \in \calV} r_{i,j} \bv_j^{k+1}{\bv_j^{k+1}}^\top$ \hfill // \textit{$\calO((\mathrm{nz}(\bR)/|\calU|)d^2)$}
        \STATE $\nabla_{\bu_i}g^{k+1} \gets \left(\bfG_i^{k+1} \!+\! \alpha_0\bfG_V^{k+1} \!+\! \lambda_U^{(i)} \I\right)\bu_i^{k} - \sum_{j \in \calV}r_{i,j}\bv_j^{k+1}$ \hfill // \textit{$\calO(d^2)$}
        \STATE $\widetilde{\bu}_{i}^{k+1} \gets \bu_{i}^{k} - \gamma\nabla_{\bu_i}g^{k+1}$ \hfill // \textit{$\calO(d)$}
        \STATE $\widetilde{\bu}_{i}^{k+1} \gets \widetilde{\bu}_{i}^{k+1} + \frac{\rho\gamma}{|\calU|}(\s^{k} - \w^{k})$ \hfill // \textit{$\calO(d)$}
      \ENDFOR
      \STATE $\widehat{\bu}^{k+1} \gets \sum_{i \in \calU}\widetilde{\bu}_{i}^{k+1}$ \hfill // \textit{$\calO(|\calU|d)$}
      \FOR{$i=1,\dots,|\calU|$} \hfill // \textit{\textbf{parallelisable loop}}
        \STATE $\bu_{i}^{k+1} \gets \widetilde{\bu}_{i}^{k+1}  -\rho(|U^2|(\nicefrac{\rho}{|\calU|}+\nicefrac{1}{\gamma}))^{-1}\widehat{\bu}^{k+1}$ \hfill // \textit{$\calO(d)$}
      \ENDFOR
      \STATE $\bt^{k+1} \gets \frac{1}{|\calU|}(\bfU^{k+1})^\top \1$ \hfill // \textit{$\calO(|\calU|d)$}
      \STATE $\s^{k+1} \gets \rho\left(\lambda_f \bfG_V^{k+1} + \rho\I\right)^{-1}\left(\bt^{k+1} - \w^{k}\right)$ \hfill // \textit{$\calO(d^3)$}
      \STATE $\w^{k+1} \gets \w^{k} + \bt^{k+1} - \s^{k+1}$ \hfill // \textit{$\calO(d)$}
    \ENDFOR
    \RETURN $\bfU^{\top}, \bfV^{\top}$ 
  \end{algorithmic}
\end{algorithm}

\subsection{Complexity Analysis}
\label{section:complexity-analysis}
\cref{alg:fiadmm} shows the detailed implementation of the proposed algorithm, \emph{fair implicit ADMM (fiADMM)}.
First, the user/item factors are initialised with independent normal noise with a $\sigma/\sqrt{d}$ standard deviation~\cite{rendle2021revisiting}.
In line 10 of \cref{alg:fiadmm}, we pre-compute $\bfG_V^{k}$, which can be reused in the update of $\bfU$ and $\s$.
The calculated average user vector $\bt = (1/|\calU|)(\bfU)^\top \1$ can be reused for both the $\s$ and $\w$ steps; hence, we compute this in line 21 in Algorithm~\ref{alg:fiadmm}.
Consequently, the computational costs for updating $\bfV$, $\bfU$, $\s$, $\w$ are, respectively,
(1) $\calO(\mathrm{nz}(\bR)d^2 \!+\! |\calV|d^3)$,
(2) $\calO(\mathrm{nz}(\bR)d^2 \!+\! |\calU|d^2)$,
(3) $\calO(|\calU|d \!+\! d^3)$,
and (4) $\calO(d)$.
Therefore, the overall cost is $\calO(\mathrm{nz}(\bR)d^2 \!+\! |\calU|d^2 \!+\! |\calV|d^3)$, which is faster than $\calO(\mathrm{nz}(\bR)d^2 \!+\! (|\calU| \!+\! |\calV|)d^3)$ of iALS; this is because we can avoid solving the linear system when updating $\bfU$ owing to the proximal gradient method with the efficient $\mathrm{prox}_{\gamma}^{k}$.
In exchange for improved runtime per step and scalability, our algorithm would slow down the convergence compared with iALS because of the linear approximation when updating $\bfU$.

\subsection{Convergence Analysis}
The objective defined in \cref{eq:fiadmm-obj} has more than two variables (i.e. three-block optimisation), and the variables are \emph{coupled} (e.g. $\bfU, \bfV$ in the iALS loss function).
Multi-block ADMM does not guarantee convergence in general~\cite{chen2016direct}. 
Various algorithms have been developed for optimisation separability and provable convergence under coupled variables~\cite{wang2014parallel,deng2017parallel,liu2019linearized}.
\citet{liu2019linearized} proposed a variant of ADMM for non-convex problems, which completely decouples variables by introducing linear approximation when updating \emph{all} coupled ones, thereby enabling parallel gradient descent.
By contrast, fiADMM applies linearisation only to the $\bfU$ step and carries out the update steps alternately.
This strategy enables second-order acceleration in the update of $\bfV$ and $\s$; however, it might impair convergence.
Considering this, we provide a convergence guarantee for fiADMM as the following theorem.
\begin{theorem}
\label{thm:fiADMM}
Assume that there exist constants $C_V, C_U, C_{\s} > 0$ such that $\|\bfV^{k}\|_F^2\leq C_V$, $\|\bfU^{k}\|_F^2\leq C_U$, $\|\s^{k}\|_2^2\leq C_{s}$ for $\forall k\geq 0$.
For $\rho \!\geq\! \max\left(\frac{24\lambda_f^2 C_V C_{\s}}{\underline{\lambda}_V}, \frac{1}{2}\!+\!\sqrt{\frac{1}{4}\!+\!6\lambda_f^2 C_V^2}\right)$ and $\gamma \!\leq\! \frac{1}{\sqrt{|\calV|}((1+\alpha_0)C_V+\bar{\lambda}_U) + 1}$, where $\bar{\lambda}_U=\max_{i \in \calU}\lambda_U^{(i)}$ and $\underline{\lambda}_V=\min_{j \in \calV}\lambda_V^{(j)}$, the augmented Lagrangian $L_{\rho}(\bfV^{k},\bfU^{k},\s^{k},\w^{k})$ converges to some value, while residual norms $\|\bfV^{k+1}\!-\bfV^{k}\|_F,\|\bfU^{k+1}\!-\bfU^{k}\|_F,\|\s^{k+1}\!-\s^{k}\|_2$, and $\|\w^{k+1}\!-\w^{k}\|_2$ converge to $0$.
Furthermore, the gradients of $L_{\rho}$ with respect to $\bfV$, $\bfU$, $\s$, and $\w$ converge to $0$.
\end{theorem}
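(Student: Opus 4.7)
The plan is to follow the standard template for non-convex multi-block ADMM convergence (in the spirit of the linearized ADMM analysis of \citet{liu2019linearized}): establish (i) a bound on the dual residual in terms of the primal residuals, (ii) a sufficient-decrease inequality for the augmented Lagrangian $L_\rho$, (iii) a uniform lower bound on $L_\rho$, and finally (iv) derive gradient vanishing from the subproblem optimality conditions. Together (i)--(iii) imply $\sum_{k}(\|\bfV^{k+1}-\bfV^{k}\|_F^2+\|\bfU^{k+1}-\bfU^{k}\|_F^2+\|\s^{k+1}-\s^{k}\|_2^2)<\infty$, forcing the three primal residuals to $0$; the dual residual then vanishes via (i); and gradient convergence falls out of (iv).

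For (i), I would use the first-order optimality of the exact $\s$-subproblem,
\begin{align*}
  \lambda_f(\bfV^{k+1})^\top\bfV^{k+1}\s^{k+1} = \rho\Bigl(\tfrac{1}{|\calU|}(\bfU^{k+1})^\top\1 - \s^{k+1} + \w^{k}\Bigr),
\end{align*}
combined with the dual update $\w^{k+1}-\w^{k} = \tfrac{1}{|\calU|}(\bfU^{k+1})^\top\1-\s^{k+1}$, to obtain the identity $\rho\w^{k+1} = \lambda_f(\bfV^{k+1})^\top\bfV^{k+1}\s^{k+1}$. Differencing across successive iterations and invoking the boundedness hypotheses $\|\bfV^{k}\|_F^2\leq C_V$ and $\|\s^{k}\|_2^2\leq C_{\s}$, the triangle inequality yields
\begin{align*}
  \|\w^{k+1}-\w^{k}\|_2^2 \;\leq\; \tfrac{C_1\lambda_f^2}{\rho^2}\bigl(\|\s^{k+1}-\s^{k}\|_2^2 + \|\bfV^{k+1}-\bfV^{k}\|_F^2\bigr),
\end{align*}
for an explicit constant $C_1$ linear in $C_V$ and $C_{\s}$.

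For (ii) I analyze the four updates separately. The $\bfV$-subproblem is a strongly convex quadratic with modulus at least $\underline{\lambda}_V$, so it decreases $L_\rho$ by at least $\tfrac{\underline{\lambda}_V}{2}\|\bfV^{k+1}-\bfV^{k}\|_F^2$. The $\s$-subproblem is $\rho$-strongly convex, producing decrease $\geq\tfrac{\rho}{2}\|\s^{k+1}-\s^{k}\|_2^2$. The $\bfU$-step is the delicate one: because it only minimizes the linearized surrogate, I invoke the descent lemma for $g(\bfV^{k+1},\cdot,\s^{k})$, first verifying that the boundedness of $\bfV^{k+1}$ makes $\nabla_{\bfU}g$ Lipschitz with constant $L\leq\sqrt{|\calV|}((1+\alpha_0)C_V+\bar{\lambda}_U)$; the hypothesis on $\gamma$ then forces $\tfrac{1}{2\gamma}-\tfrac{L}{2}\geq\tfrac{1}{2}$ and yields a decrease of order $\|\bfU^{k+1}-\bfU^{k}\|_F^2$. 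The dual update increases $L_\rho$ by exactly $\rho\|\w^{k+1}-\w^{k}\|_2^2$; substituting the bound from (i) shows this increase is absorbed into the $\bfV$ and $\s$ decreases precisely when $\rho$ satisfies the two lower bounds in the theorem (the $\tfrac{1}{2}+\sqrt{\tfrac{1}{4}+6\lambda_f^2 C_V^2}$ term absorbs the $\s$-contribution against $\tfrac{\rho}{2}$, while the $24\lambda_f^2 C_V C_{\s}/\underline{\lambda}_V$ term absorbs the $\bfV$-contribution against $\tfrac{\underline{\lambda}_V}{2}$).

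For (iii), substituting the identity $\rho\w^{k+1}=\lambda_f(\bfV^{k+1})^\top\bfV^{k+1}\s^{k+1}$ into $L_\rho$ and using nonnegativity of the iALS loss and of the fairness and L2 terms shows $L_\rho^{k+1}$ is bounded below uniformly in $k$. Together with sufficient decrease this forces $L_\rho^{k}$ to converge and the three primal residuals to be square-summable, hence to vanish; (i) then delivers $\|\w^{k+1}-\w^{k}\|_2\to 0$. For (iv), first-order optimality of the $\bfV$ and $\s$ subproblems shows $\nabla_{\bfV}L_\rho$ and $\nabla_{\s}L_\rho$ evaluated at the new iterate differ from zero only by residual-sized terms (from the shifts $\w^{k}\!\to\!\w^{k+1}$, $\bfU^{k}\!\to\!\bfU^{k+1}$); the proximal-gradient stationarity of the $\bfU$-step gives $\nabla_{\bfU}L_\rho^{k+1} = \nabla_{\bfU}g^{k+1}-\nabla_{\bfU}g^{k}-\tfrac{1}{\gamma}(\bfU^{k+1}-\bfU^{k})$, which vanishes by Lipschitz continuity; and $\nabla_{\w}L_\rho^{k+1}=\rho(\w^{k+1}-\w^{k})\to 0$. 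The principal obstacle is coordinating (i) and (ii): the $\bfU$-update descends on a linearized surrogate rather than the true $L_\rho$, while the dual residual couples back into the $\bfV$ and $\s$ residuals through (i); choosing $\rho$ and $\gamma$ to produce a strictly negative net change in $L_\rho$ in every cycle is exactly what pins down the quadratic-in-$\rho$ condition and the $\gamma$-step size in the theorem.
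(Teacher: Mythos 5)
Your proposal is correct and follows essentially the same route as the paper: the same key identity $\rho\w^{k+1}=\nabla_{\s}g(\bfV^{k+1},\bfU^{k+1},\s^{k+1})=\lambda_f(\bfV^{k+1})^\top\bfV^{k+1}\s^{k+1}$ to control the dual residual, the same strong-convexity/descent-lemma decomposition of the per-cycle decrease of $L_\rho$ (with the linearized $\bfU$-step handled via the Lipschitz constant of $\nabla_{\bfU}g$ and the stated bound on $\gamma$), the same lower bound on $L_\rho$ via that identity and nonnegativity of $g$, and the same use of subproblem optimality conditions plus Lipschitz continuity to get gradient vanishing. The only differences are organizational (the paper bundles the $\bfV$ and $\bfU$ steps into one lemma) and a minor imprecision in your constant $C_1$, whose $\s$-residual coefficient is quadratic rather than linear in $C_V$.
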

\begin{proof}[Proof of Theorem \ref{thm:fiADMM}]
In the proof, we use the following lemma on the smoothness of $g$:
\begin{lemma}
\label{lem:g_smooth}
For any $\bfV,\bfV'\in \mathbb{R}^{|\calV|d}$, $\s,\s'\in \mathbb{R}^d$, and $\bfU,\bfU'\in \mathbb{R}^{|\calU|d}$, function $g$ satisfies the following inequalities:
\begin{align*}
    &\|\nabla_{\bfV}g(\bfV,\!\bfU,\!\s) \!-\! \nabla_{\bfV}g(\bfV',\!\bfU,\!\s)\|_F \!\leq\! \sqrt{|\calV|}\!\left(\!\left(\!1 \!+\! \alpha_0\!\right)\!\|\bfU\|_F^2 \!+\! \|\s\|_2^2 \!+\! \bar{\lambda}_V\!\right)\!\|\bfV\!-\!\bfV'\|_F ,\\
    &\|\nabla_{\bfU} g(\bfV,\! \bfU,\! \s) \!-\! \nabla_{\bfU} g(\bfV,\! \bfU',\! \s')\|_F \!\leq\! \sqrt{|\calU|}\!\left(\!(\!1 \!+\! \alpha_0)\|\bfV\|_F^2 + \bar{\lambda}_U\right)\|\bfU\!-\!\bfU'\|_F ,\\
    &\|\nabla_{\s}g(\bfV,\! \bfU,\! \s) \!-\! \nabla_{\s}g(\bfV,\! \bfU',\! \s')\|_2 \!\leq\! \lambda_f\|\bfV\|_F^2\|\s\!-\!\s'\|_2 ,\\
    &\|\nabla_{\s}g(\bfV,\! \bfU,\! \s) \!-\! \nabla_{\s}g(\bfV',\! \bfU,\! \s)\|_2 \!\leq\! \lambda_f(\|\bfV\|_F + \|\bfV'\|_F)\|\s\|_2\|\bfV\!-\!\bfV'\|_F,
\end{align*}
where $\bar{\lambda}_U = \max_{i \in \calU}\lambda_U^{(i)}$ and $\bar{\lambda}_V = \max_{j \in \calV}\lambda_V^{(j)}$.
\end{lemma}

We prove the first part of the theorem.
We decompose the difference of $L_\rho$ before and after a single epoch update into that before and after each alternating step.
\begin{align}
\label{eq:decomposition_lagrangian}
    \nonumber
    &L_\rho(\bfV^{k+1}\!, \bfU^{k+1}\!, \s^{k+1}\!, \w^{k+1}) - L_\rho(\bfV^{k}\!, \bfU^{k}\!, \s^{k}\!, \w^{k}) \\
    &= \left(L_\rho(\bfV^{k+1}\!, \bfU^{k+1}\!, \s^{k}\!, \w^{k}) - L_\rho(\bfV^{k}\!, \bfU^{k}\!, \s^{k}\!, \w^{k})\right) \nonumber\\
    &\phantom{=} + \left(L_\rho(\bfV^{k+1}\!, \bfU^{k+1}\!, \s^{k+1}\!, \w^{k}) - L_\rho(\bfV^{k+1}\!, \bfU^{k+1}\!, \s^{k}\!, \w^{k})\right) \nonumber\\
    &\phantom{=} + \left(L_\rho(\bfV^{k+1}\!, \bfU^{k+1}\!, \s^{k+1}\!, \w^{k+1}) - L_\rho(\bfV^{k+1}\!, \bfU^{k+1}\!, \s^{k+1}\!, \w^{k})\right).
\end{align}
By Lemma \ref{lem:g_smooth}, we obtain the upper bound on each term in the RHS:
\begin{lemma}
\label{lem:vu_sub_ub}
The update of $\bfV$ and $\bfU$ in the $(k+1)$-step satisfies
\begin{align*}
    &L_\rho(\bfV^{k+1}, \bfU^{k+1}, \s^{k}, \w^{k}) - L_\rho(\bfV^{k}, \bfU^{k}, \s^{k}, \w^{k}) \\
    &\leq \frac{\sqrt{|\calU|}((1 + \alpha_0)C_V + \bar{\lambda}_U) - 1/\gamma}{2}\|\bfU^{k+1}\!-\!\bfU^{k}\|_F^2 - \frac{\underline{\lambda}_V}{2}\|\bfV^{k+1}\!-\!\bfV^{k}\|_F^2.
\end{align*}
\end{lemma}
\begin{lemma}
\label{lem:s_sub_ub}
The update of $\s$ in the $(k+1)$-th step satisfies
\begin{align*}
  L_\rho(\bfV^{k+1}\!, \bfU^{k+1}\!, \s^{k+1}\!, \w^{k}) \!-\! L_\rho(\bfV^{k+1}\!, \bfU^{k+1}\!, \s^{k}\!, \w^{k})
  \!\leq\! - \frac{\rho}{2}\|\s^{k+1}\!-\!\s^{k}\|_2^2.
\end{align*}
\end{lemma}
\begin{lemma}
\label{lem:w_sub_ub}
The update of $\w$ in the $(k+1)$-th step satisfies
\begin{align*}
    &L_\rho(\bfV^{k+1}, \bfU^{k+1}, \s^{k+1}, \w^{k+1}) - L_\rho(\bfV^{k+1}, \bfU^{k+1}, \s^{k+1}, \w^{k}) \\
    &\leq \frac{3\lambda_f^2C_V^2}{\rho}\|\s^{k+1}-\s^{k}\|_2^2+\frac{6\lambda_f^2 C_V C_s}{\rho}\|\bfV^{k+1}-\bfV^{k}\|_F^2.
\end{align*}
\end{lemma}
By combining \cref{eq:decomposition_lagrangian} and Lemmas \ref{lem:vu_sub_ub} to \ref{lem:w_sub_ub}, under the assumptions about $\rho$ and $\gamma$, we have
\begin{align}
\label{eq:L_rho_sub_ub}
    &L_\rho(\bfV^{k+1}, \bfU^{k+1}, \s^{k+1}, \w^{k+1}) - L_\rho(\bfV^{k}, \bfU^{k}, \s^{k}, \w^{k}) \nonumber\\
    &\leq \frac{\sqrt{|\calU|}((1 + \alpha_0)C_V + \bar{\lambda}_U) - 1/\gamma}{2}\|\bfU^{k+1}-\bfU^{k}\|_F^2 \nonumber\\
    &\phantom{\leq} + \left(- \frac{\underline{\lambda}_V}{2} \!+\! \frac{6\lambda_f^2 C_V C_s}{\rho}\right)\|\bfV^{k+1} \!-\! \bfV^{k}\|_F^2 
    + \left(- \frac{\rho}{2} \!+\! \frac{3\lambda_f^2 C_V^2}{\rho}\right)\|\s^{k+1}-\s^{k}\|_2^2 \nonumber\\
    & \leq -\frac{1}{2}\|\bfU^{k+1}\!-\!\bfU^{k}\|_F^2 - \frac{\underline{\lambda}_V}{4}\|\bfV^{k+1}\!-\!\bfV^{k}\|_F^2 
    - \frac{1}{2}\|\s^{k+1}\!-\!\s^{k}\|_2^2 \leq 0.
\end{align}
Therefore, $L_{\rho}(\bfV^{k}, \bfU^{k}, \s^{k}, \w^{k})$ is monotonically decreasing.

Here, we obtain the following lower bound on $L_{\rho}(\bfV^{k}, \bfU^{k}, \s^{k}, \w^{k})$:
\begin{lemma}
\label{lem:L_rho_lb}
$\bfV^{k}, \bfU^{k}, \s^{k}$, and $\w^{k}$ updated by fiADMM satisfy
\begin{align*}
    L_\rho(\bfV^{k}, \bfU^{k}, \s^{k}, \w^{k}) \geq \frac{\rho-\lambda_f C_V}{2}\||\calU|^{-1}(\bfU^{k})^\top \1-\s^{k}\|_2^2.
\end{align*}
\end{lemma}
Thus, when $\rho \geq \lambda_f C_V$ holds, $L_\rho(\bfV^{k}, \bfU^{k}, \s^{k}, \w^{k})$ is lower bounded by $0$.
Therefore, owing to its monotonic decrease, $L_\rho(\bfV^{k}, \bfU^{k}, \s^{k}, \w^{k})$ converges to some constant value, and the LHS of \cref{eq:decomposition_lagrangian} converges to $0$.
From \cref{eq:L_rho_sub_ub} and the fact that $L_\rho(\bfV^{k+1}, \bfU^{k+1}, \s^{k+1}, \w^{k+1}) - L_\rho(\bfV^{k}, \bfU^{k}, \s^{k}, \w^{k})$ converges to $0$, $\|\bfV^{k+1}-\bfV^{k}\|_F$, $\|\bfU^{k+1}-\bfU^{k}\|_F$, and $\|\s^{k+1}-\s^{k}\|_2$ also converge to $0$.
Because it holds that
\begin{align*}
    &L_\rho(\bfV^{k+1}, \bfU^{k+1}, \s^{k+1}, \w^{k+1}) - L_\rho(\bfV^{k+1}, \bfU^{k+1}, \s^{k+1}, \w^{k}) \\
    &= \rho\|\w^{k+1}-\w^{k}\|_2^2 \leq  \frac{3\lambda_f^2C_V^2}{\rho}\|\s^{k+1}-\s^{k}\|_2^2+\frac{6\lambda_f^2C_VC_s}{\rho}\|\bfV^{k+1}-\bfV^{k}\|_F^2,
\end{align*}
we can also state that $\|\w^{k+1}-\w^{k}\|_2$ converges to $0$.

We next prove the second part of the theorem.
Since $\bfV^{k+1}$ minimises $L_{\rho}(\bfV,\bfU^{k},\s^{k},\w^{k})$, it holds that $\nabla_{\bfV}L_{\rho}(\bfV^{k+1},\bfU^{k},\s^{k},\w^{k})=0$, and we obtain the following inequality from Lemma \ref{lem:g_smooth}:
\begin{flalign*}
    &\|\nabla_{\bfV}L_{\rho}(\bfV^{k}\!,\bfU^{k}\!,\s^{k}\!,\w^{k})\|_F &\\
    &= \|\nabla_{\bfV}L_{\rho}(\bfV^{k+1}\!,\bfU^{k}\!,\s^{k}\!,\w^{k}) - \nabla_{\bfV}L_{\rho}(\bfV^{k}\!,\bfU^{k}\!,\s^{k}\!,\w^{k})\|_F \\
    &= \|\nabla_{\bfV}g(\bfV^{k+1}\!,\bfU^{k}\!,\s^{k}) - \nabla_{\bfV}g(\bfV^{k}\!,\bfU^{k}\!,\s^{k})\|_F \\
    &\leq \sqrt{|\calV|}\left(\left(1 \!+\! \alpha_0\right)C_U \!+\! C_s \!+\! \bar{\lambda}_V\right)\|\bfV^{k+1}-\bfV^{k}\|_F.
\end{flalign*}
Since $\|\bfV^{k+1}-\bfV^{k}\|_F$ converges to $0$, $\|\nabla_{\bfV}L_{\rho}(\bfV^{k},\bfU^{k},\s^{k},\w^{k})\|_F$ and then $\nabla_{\bfV}L_{\rho}(\bfV^{k},\bfU^{k},\s^{k},\w^{k})$ converge to $0$.
Similarly, we have:
\begin{align*}
    &\|\nabla_{\bfU}L_{\rho}(\bfV^{k},\bfU^{k},\s^{k},\w^{k})\|_F \\
    &= \|\nabla_{\bfU}g(\bfV^{k},\bfU^{k},\s^{k}) + \rho |\calU|^{-\!2}\1\1^\top \bfU^{k} + \rho |\calU|^{-1}\1(\w^{k}\!-\!\s^{k})^\top\|_F \\
    &= \|\nabla_{\bfU}g(\bfV^{k}\!,\bfU^{k}\!,\s^{k}) \!-\! \nabla_{\bfU}g(\bfV^{k}\!,\bfU^{k\!-\!1}\!, \s^{k\!-\!1}) + \rho |\calU|^{-1}\1(\w^{k}\!-\!\w^{k\!-\!1})^{\!\top} \\
    &\phantom{=} - \rho |\calU|^{-\!1}\1(\s^{k}\!-\!\s^{k\!-\!1})^{\!\top}  \!-\! (1/\gamma)(\bfU^{k}\!-\!\bfU^{k\!-\!1})\|_F \\
    &\leq \|\nabla_{\bfU}g(\bfV^{k}\!,\bfU^{k}\!,\s^{k}) \!-\! \nabla_{\bfU}g(\bfV^{k}\!,\bfU^{k\!-\!1}\!, \s^{k\!-\!1})\|_F  \!+\! (1/\gamma)\|\bfU^{k}\!-\!\bfU^{k\!-\!1}\|_F \\
    &\phantom{=}  \!+\! \rho|\calU|^{-\!1} (\|\1(\w^{k}\!-\!\w^{k\!-\!1})^{\!\top}\|_F \!+\! \|\1(\s^{k}\!-\!\s^{k\!-\!1})^{\!\top}\|_F) \\
    &\leq \sqrt{|\calU|}\left((1 \!+\! \alpha_0)C_V \!+\! \bar{\lambda}_U\right)\|\bfU^{k}\!-\!\bfU^{k\!-\!1}\|_F + (1/\gamma)\|\bfU^{k}\!-\!\bfU^{k\!-\!1}\|_F \\
    &\phantom{\leq} + \rho|\calU|^{-1}\| \1(\w^{k}\!-\!\w^{k\!-\!1})^{\!\top}\|_F+\! \rho |\calU|^{-1}\|\1(\s^{k}\!-\!\s^{k\!-\!1})^{\!\top}\|_F,
\end{align*}
where the second equality follows from the fact that $\bfU^{k}$ minimises $\nicefrac{\rho}{2}\||\calU|^{-1}\bfU^\top\1 \!+\! \w^{k\!-\!1} \!-\! \s^{k\!-\!1}\|_2^2 \!-\! \nicefrac{\rho}{2} \|\w^{k\!-\!1}\|_2^2 + \nicefrac{1}{2\gamma}\|\bfU\!-\!\bfU^{k\!-\!1}\|_F^2 \!+\! \langle \bfU-\bfU^{k\!-\!1}, \nabla_{\bfU} g(\bfV^{k}\!, \bfU^{k\!-\!1}\!, \s^{k\!-\!1})\rangle_F$.
Here, since $\|\bfU^{k}-\bfU^{k\!-\!1}\|_F$, $\|\w^{k}-\w^{k\!-\!1}\|_2$ converge to $0$, $\nabla_{\bfU}L_{\rho}(\bfV^{k},\bfU^{k},\s^{k},\w^{k})$ converges to $0$.

Because $\s^{k}$ minimises $L_{\rho}(\bfV^{k},\bfU^{k},\s,\w^{k\!-\!1})$, we also have 
\begin{align*}
    \!\|\nabla_{\s}L_{\rho}(\bfV^{k}\!,\bfU^{k}\!,\s^{k}\!,\w^{k})\|_2 \!&=\! \|\nabla_{\s}L_{\rho}(\!\bfV^{k}\!,\bfU^{k}\!,\s^{k}\!,\w^{k\!-\!1}\!) \!-\! \nabla_{\s}L_{\rho}(\!\bfV^{k}\!,\bfU^{k}\!,\s^{k}\!,\w^{k}\!)\|_2 \\
    &= \rho\|\w^{k} - \w^{k\!-\!1}\|_2.
\end{align*}
Thus, since $\|\w^{k} - \w^{k-1}\|_2$ converges to $0$, $\nabla_{\s}L_{\rho}(\bfV^{k},\bfU^{k},\s^{k},\w^{k})$ converges to $0$.
Finally, it holds that
\begin{align*}
    \|\nabla_{\w}L_{\rho}(\bfV^{k},\bfU^{k},\s^{k},\w^{k})\|_2 &\!=\! \rho\||\calU|^{-1}(\bfU^{k})^\top \1 \!-\! \s^{k}\|_2 \\
    &\!=\! \rho\|\w^{k} \!-\! \w^{k-1}\|_2,
\end{align*}
and hence $\nabla_{\w}L_{\rho}(\bfV^{k},\bfU^{k},\s^{k},\w^{k})$ converges to $0$.
\end{proof}

\cref{thm:fiADMM} illustrates that the sequence $\{\bfU^k, \bfV^k, \s^k, \w^k\}$ will converge to the feasible set, in which $\s=(1/|\calU|)\bfU^\top\1$ holds.
Moreover, the derivative of the augmented Lagrangian with respect to the primal variables will converge to zero, which implies that the limit points of $\{\bfU^k, \bfV^k, \s^k, \w^k\}$ should be saddle points (i.e. KKT points of \cref{eq:fiadmm-obj}) of $L_\rho$ if there exist.
Notably, the above convergence relies strongly on the fact that the objective is strongly convex with respect to each variable when the other variables are held constant.
This property is inherited from iALS, and therefore fiADMM takes advantages of iALS in both scalability and convergence.

The proofs of the lemmas are provided in the Appendix.

\section{Numerical Experiment}
\subsection{Setup}
We evaluate our fiADMM following the protocol of \citet{rendle2021revisiting} on the MovieLens 20M (\ML)~\cite{harper2015movielens} and Million Song Dataset (\MSD)~\cite{bertin2011million} benchmarks.
The evaluation procedure follows a strong generalisation setting, in which we use all interactions of 80\% of the users for training and consider the remaining two sets of 10\% of the users as holdout splits.
In the validation and testing phases, a system predicts the preference scores of all items for each user based on the 80\% interactions of the user to produce the ranked list and then computes ranking measures using the remaining 20\% of the interactions for the ranked list.
As systems must make predictions for users who do not appear in the training phase,
fiADMM optimises $\bfU,\s,\w$ with the fixed $\bfV$ based on the users' 80\% of the users' interactions as in iALS.
Throughout the experiments, we trained fiADMM by setting $T\!=\!100$ and $T\!=\!50$ for the number of training and prediction epochs, respectively, with a constant learning rate $\gamma\!=\!0.05$ and standard deviation $\sigma\!=\!0.1$ for initialisation.
For fiADMM, we tuned $\lambda_2$, $\alpha_0$, $\lambda_f$, and $\rho$ as hyper-parameters; we set exponent $\eta\!=\!1.0$ in L2 regularisation for all settings of iALS and fiADMM.
We implemented fiADMM\footnote{We will publish the code here.} based on the efficient C++ implementation provided by \citet{rendle2021revisiting}\footnote{\url{https://github.com/google-research/google-research/tree/master/ials/}}, which is multi-threaded and uses Eigen\footnote{\url{https://eigen.tuxfamily.org/}} for vector and matrix operations that support AVX instructions.
For a fair comparison, we used frequency-based re-scaling of $\bLambda_U$ and $\bLambda_V$~\cite{zhou2008large,rendle2021revisiting} for both iALS and fiADMM.

\subsection{Final Quality}
\noindent\textbf{Trade-off between Quality and Fairness.}
As fiADMM is designed to balance ranking quality and item fairness, we first examine its trade-off efficiency comparing with iALS.
We evaluate models obtained in a grid-search of hyper-parameters; $\lambda_2$ and $\alpha_0$ for iALS, and $\lambda_2$, $\alpha_0$, $\lambda_f$, and $\rho$ for fiADMM\footnote{Hyper-parameters are from $\lambda_2 \!\in\!\! \{1\mathrm{e-}3,\! 2\mathrm{e-}3,\! \dots,\! 9\mathrm{e-}3 \}$, $\alpha_0 \!\in\! \{ 0.01,\! 0.02,\! \dots,\! 0.09,\! 0.1,\! 0.2 \}$, and $\lambda_f,\rho \!\in\! \{1\mathrm{e}3,\! 2\mathrm{e}3,\! \dots,\! 9\mathrm{e}3,\! 1\mathrm{e}4,\! 1.3\mathrm{e}4,\! 1.5\mathrm{e}4,\! 1.7\mathrm{e}4,\! 2\mathrm{e}4 \!\}$.}.

\cref{fig:trade_off_quality_vs_fairness} summarises the trade-off between ranking quality and item fairness on the validation splits of \ML and \MSD.
We use R@$K$ ($K\!=\!20,50$) and nDCG@100 as measures of ranking quality and Gini@$K$ as those of the inequality of item exposure.
The Gini@$K$ measure is the Gini index by defining the utility of item $j$ as $o_j \!=\! \sum_{i \in \calU}\ind{\text{$j$ in the top-$K$ for $i$}}$ in \cref{eq:gini-index}.
The lines in each figure indicate the Pareto frontiers of methods with various hyper-parameter settings.
In \ML, fiADMM (red line) clearly achieve a more satisfactory trade-off than iALS, particularly in settings with strict fairness constraints (left side of the figures).
However, fiADMM compromises ranking quality more than iALS when item fairness is not important (right side in the figures).
The results of \MSD are more clear than those of \ML: fiADMM demonstrates a superior efficiency to iALS with a large margin.
The degradation in the quality-heavy settings may be because the inequality of item popularity is larger in \ML than in \MSD; the Gini indices of the number of training interactions for items are $0.901$ and $0.558$ for \ML and \MSD, respectively.
Thus, quality and fairness are not in severe conflict with each other in \MSD, whereas achieving high fairness is difficult on \ML under a strict constraint on quality.

\begin{figure}[t]
    \centering
    \includegraphics[keepaspectratio, width=0.99\linewidth]{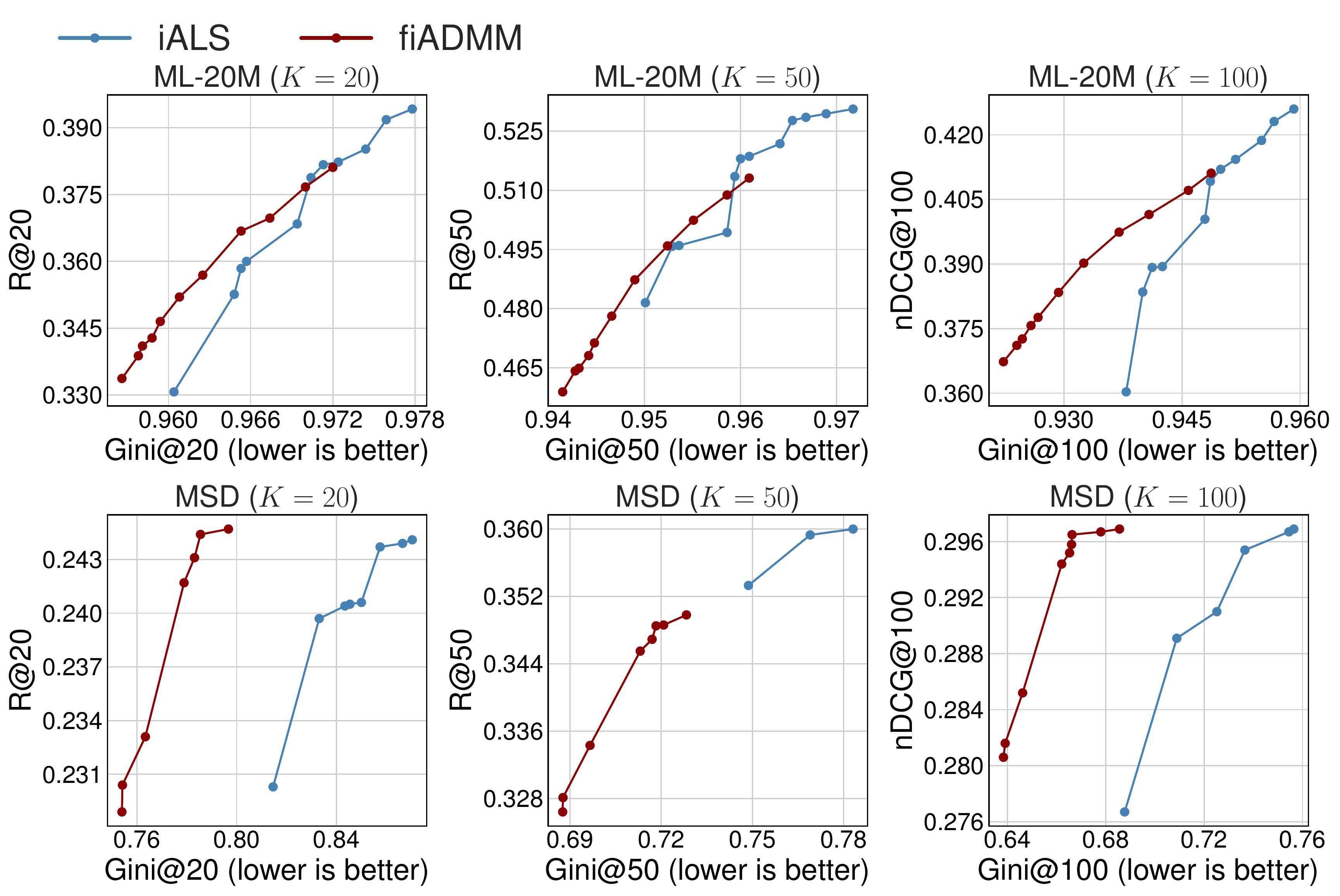}
    \caption{Trade-off between quality and fairness.}
    \vspace{-0.2cm}
    \label{fig:trade_off_quality_vs_fairness}
\end{figure}

\noindent\textbf{Distribution of Exposure.}
To visualise the extent to which fiADMM improves item fairness,
\cref{fig:exposure_vs_coverage} illustrates the distribution of item exposure in iALS and fiADMM on \ML (top row) and \MSD (bottom row).
Each curve in the figures shows the exposure of items in the top-$K$ rankings sorted in increasing order.
We omitted the items that did not appear in any top-$K$ rankings and hence the right end points on the x-axis represent the numbers of unique items that are exposed at least once, that is, item coverage.
The curves in the figures represent the iALS with the best setting in terms of ranking quality and fiADMM with small and large $\lambda_f$, $\lambda_f\!=\!1\mathrm{e}3$ and $\lambda_f\!=\!1\mathrm{e}4$ for \ML and $\lambda_f\!=\!5\mathrm{e}3$ and $\lambda_f\!=\!5\mathrm{e}4$ for \MSD.
The overall trend is clear: for various $K$, fiADMM improves item coverage and reduces the maximum exposure for an item when using a large $\lambda_f$.
Moreover, the fiADMM models with the large $\lambda_f$ also retain acceptable nDCG@100; $0.378$ on \ML and $0.283$ on \MSD while those of the iALS models are $0.426$ on \ML and $0.297$ on \MSD.
This suggests the effectiveness of our score-based fairness regulariser as a surrogate of the exposure inequality in rankings.

\noindent\textbf{Combining Post-Processing.}
Because fiADMM is an in-processing method, it can be utilised with some post-processing methods.
We hence investigate the effect of applying fairness-aware post-processing to fiADMM.
This experiment considers a method that combines fiADMM and FairRec~\cite{patro2020fairrec} (fiADMM+FairRec), in which 
we first train an MF model using fiADMM for preference estimation and then optimise allocation of ranked items based on the FairRec algorithm\footnote{We jointly tuned $\alpha_0, \lambda_2, \lambda_f, \rho$, and additionally the scale $l \in (0, 1]$ of the minimum allocation constraint for each item $l \cdot (K \cdot |\calU|) / |\calV|$ where $K\!=\!100$. We tuned $l$ in the range of $\{1\mathrm{e-}3,2\mathrm{e-}3,\dots,9\mathrm{e-}3, 1\mathrm{e-}2,2\mathrm{e-}2,\dots,9\mathrm{e-}2, 0.1, 0.2, \dots, 0.9\}$.}.
\cref{fig:tradeoff-with-fairrec} shows the Pareto-frontiers of fiADMM and fiADMM+FairRec on \ML.
The effect of applying FairRec is not advantageous in the balance of nDCG@$100$ and Gini@$100$,
whereas it substantially improves that of R@$20$ and Gini@$20$ in fairness-heavy settings;
in the quality-heavy settings, the performance gain is not considerable even in the left figure.
Because FairRec is not parallelisable with respect to users and thus computationally taxing for large-scale settings,
fiADMM is a reasonable choice to balance quality and fairness in practise.

\begin{figure}[t]
    \centering
    \includegraphics[keepaspectratio, width=0.99\linewidth]{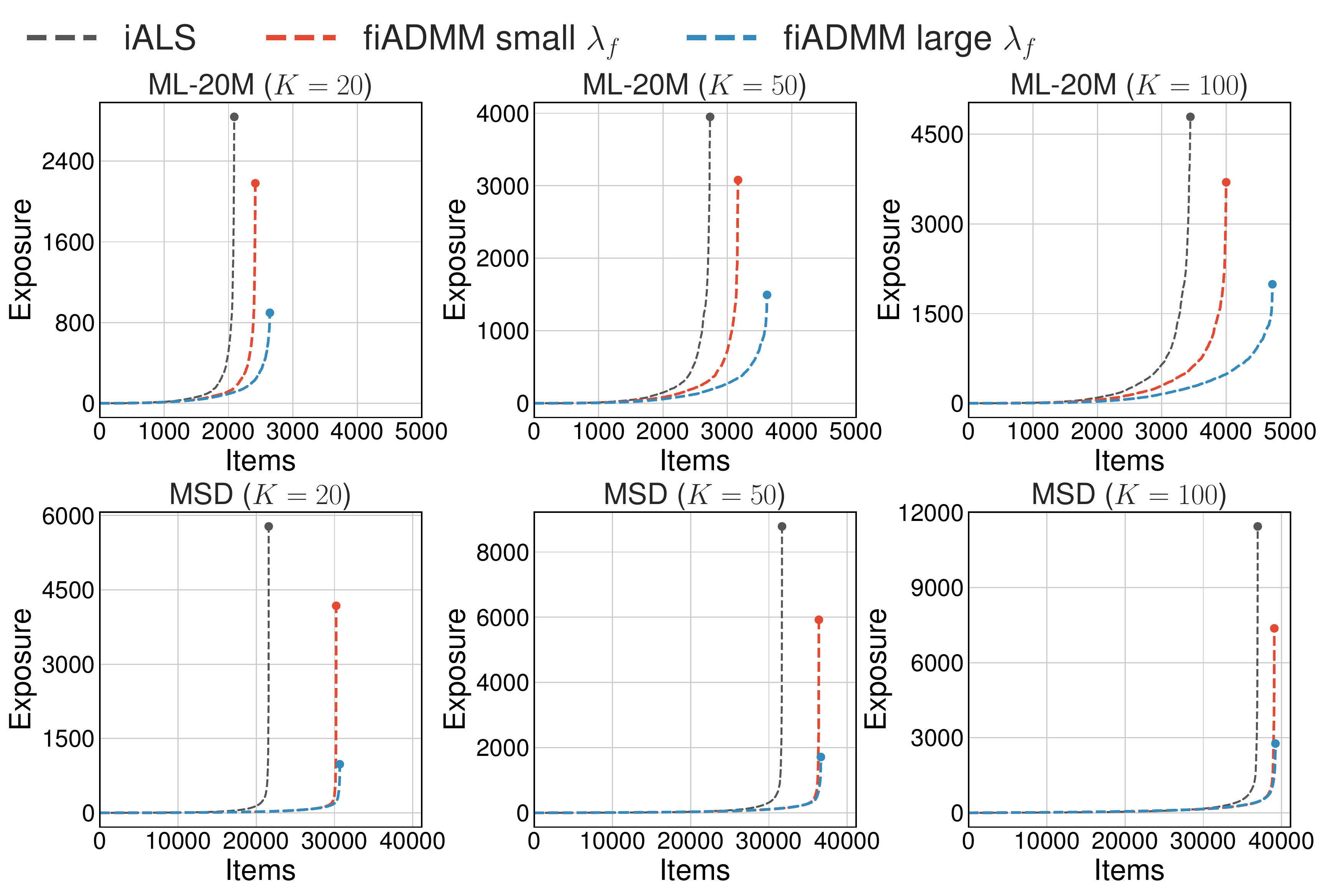}
    \caption{Distribution of item exposure.}
     \vspace{-0.2cm}
    \label{fig:exposure_vs_coverage}
\end{figure}

\begin{figure}[t]
    \centering
    \includegraphics[keepaspectratio, width=0.8\linewidth]{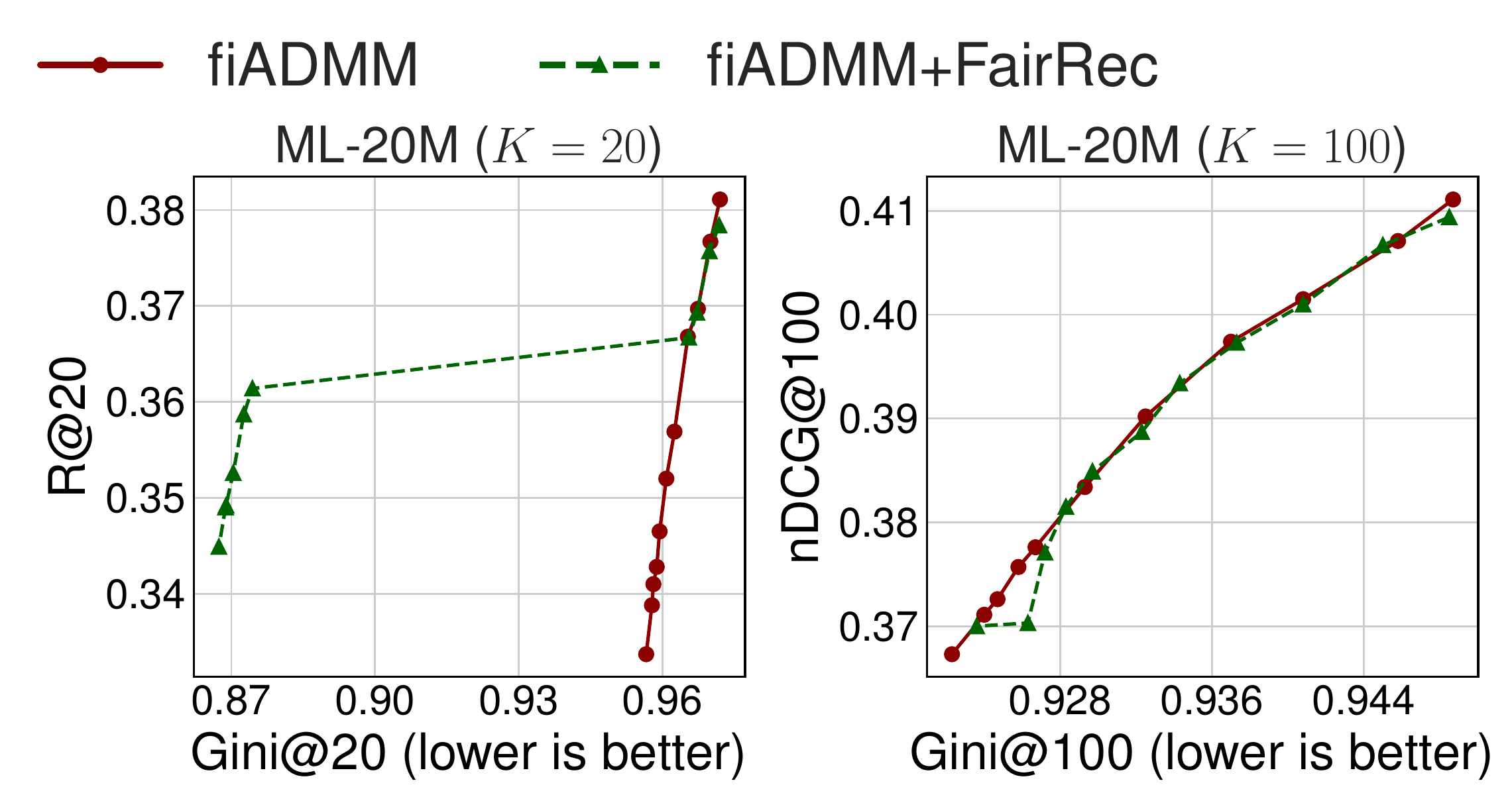}
    \caption{Effect of combining pre- and post-processing.}
    \label{fig:tradeoff-with-fairrec}
\end{figure}

\begin{table}[t]
\vspace{-0.2cm}
\caption{Ranking quality results on (a) \ML and (b) \MSD}
\vspace{-0.2cm}
\footnotesize	
\begin{subtable}[t]{0.9\linewidth}
  \caption{\ML}
  \vspace{-0.1cm}
  \label{table:effectiveness_ml20m}
  \setlength\tabcolsep{3.0pt}
   \begin{tabular}{|l|ccc|c|}
      \hline
      Method & Recall@20       & Recall@50 & nDCG@100 & Result from \\\hline %
      RecVAE~\cite{shenbin2020recvae}                    & $0.414$      & $0.553$         & $0.442$         & \cite{shenbin2020recvae}        \\
      Mult-VAE~\cite{liang2018variational}                    & $0.395$      & $0.537$         & $0.426$         & \cite{liang2018variational}        \\
      iALS                    & $0.395$      & $0.532$         & $0.425$         & \cite{rendle2021revisiting}        \\      
      $\textbf{fiADMM}$                       & $\textbf{0.377}$      & $\textbf{0.513}$         & $\textbf{0.403}$           & our result        \\
      Popularity                    & $0.162$     & $0.235$         & $0.191$         & \cite{steck2019embarrassingly}  \\\hline
  \end{tabular}
\end{subtable} \\
\begin{subtable}[t]{0.9\linewidth}
  \vspace{0.1cm}
  \caption{\MSD}
  \vspace{-0.1cm}
  \label{table:effectiveness_msd}
  \setlength\tabcolsep{3.0pt}
    \begin{tabular}{|l|ccc|c|}
      \hline
      Method & Recall@20       & Recall@50 & nDCG@100 & Result from \\\hline %
      RecVAE                  & 0.333      & 0.428         & 0.389         & \cite{shenbin2020recvae}        \\
      iALS++ ($d=8,192$)                    & 0.309      & 0.415         & 0.368        & \cite{rendle2021ials++}       \\
      Mult-VAE                    & 0.266     & 0.364         & 0.319         & \cite{liang2018variational}        \\ 
      iALS ($d=512$)                   & $0.245$      & $0.359$         & $0.297$         & our result        \\
      $\textbf{fiADMM}$ ($d=512$)                    & $\textbf{0.245}$      & $\textbf{0.349}$        & $\textbf{0.296}$         & our result        \\     
      Popularity                    & 0.043     & 0.068        & 0.058         & \cite{steck2019embarrassingly}   \\ \hline
  \end{tabular}
\end{subtable}
\end{table}

\begin{figure*}[th]
    \centering
    \includegraphics[keepaspectratio, width=0.98\linewidth]{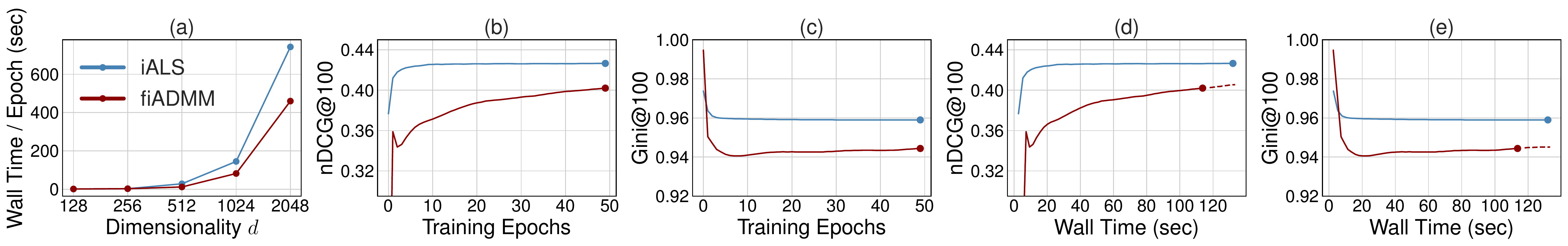}
    \caption{Training efficiency of iALS and fiADMM.}
     \vspace{-0.25cm}
    \label{fig:training_efficiency}
\end{figure*}

\begin{figure*}[th]
    \centering
    \includegraphics[keepaspectratio, width=0.98\linewidth]{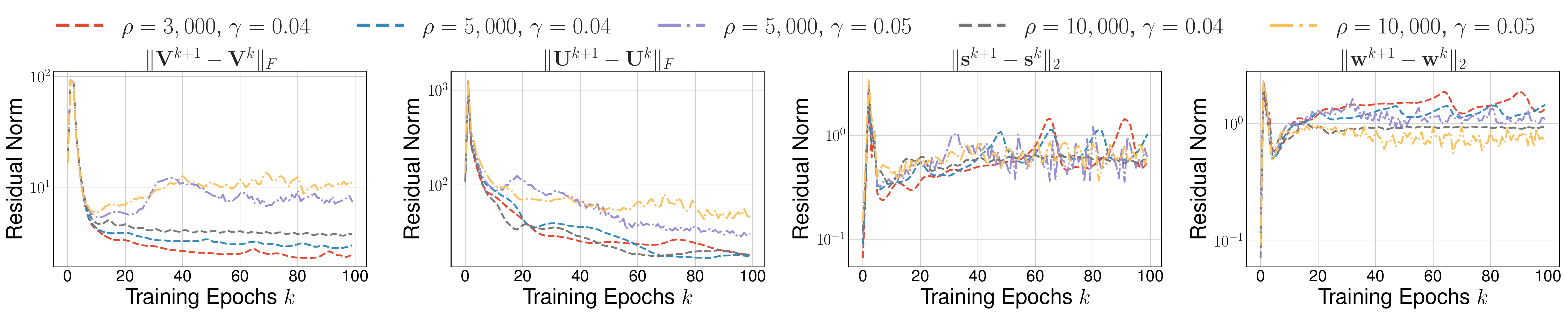}
    \caption{Convergence profile of fiADMM with varying $\rho$ and $\gamma$.}
     \vspace{-0.17cm}
    \label{fig:fiadmm_convergence}
\end{figure*}

\noindent\textbf{Ranking Effectiveness.}
We also compare fiADMM with baseline methods, namely, a pure popularity recommender~\cite{steck2019embarrassingly}, iALS~\cite{hu2008collaborative}, and variational 
autoencoders (i.e. RecVAE~\cite{shenbin2020recvae} and Mult-VAE~\cite{liang2018variational}).
For fiADMM, we set $\lambda_f\!=\!1\mathrm{e}3$ and $\lambda_f\!=\!3\mathrm{e}3$ for \ML and \MSD, respectively.
To reduce the experimental burden due to the $O(d^3)$ factor in fiADMM and iALS,
we set $d\!=\!256$ and $d\!=\!512$ for \ML and \MSD, respectively; we also report the result of an iALS variant for large $d$ (i.e. \emph{iALS++}~\cite{rendle2021ials++}) with $d\!=\!8,192$ for \MSD to show the full potential of iALS.
\cref{table:effectiveness_ml20m} and \cref{table:effectiveness_msd} show the results\footnote{All the results reported in \cref{table:effectiveness_ml20m} and \cref{table:effectiveness_msd}, where obtained by repeating the experiments ten times and the average is reported here.} for \ML and \MSD, respectively.
In \ML, the ranking quality of fiADMM is comparable to that of the baselines, whereas a substantial discrepancy is observed compared to RecVAE and iALS++ in \MSD.
The deterioration of fiADMM from iALS is not severe in both \ML and \MSD despite the fairness-aware multi-objective optimisation.
The results of iALS++ also suggest further improvement in fiADMM by using a larger $d$ in \MSD.

\subsection{Training Efficiency}

\noindent\textbf{Wall Time per Epoch.}
\label{section:wall-runtime-per-epoch}
The analysis discussed in \cref{section:complexity-analysis} suggests that fiADMM is more efficient than iALS regarding the asymptotic runtime per epoch.
\cref{fig:training_efficiency}~(a) shows the effect of $d$ on the wall time per epoch on \ML.
The x- and y-axes, respectively, indicate the dimensionality $d$ of user/item latent factors and the wall runtime per epoch in seconds\footnote{These times were obtained on a GCP instance with 57.6 GB memory and 64 vCPUs.}.
The difference between iALS and fiADMM is more pronounced for a larger $d$.
These results are clearly in agreement with the theoretical analysis.

\noindent\textbf{Convergence Speed.}
The optimisation efficiency also depends on the convergence speed in practise, that is, the number of epochs the algorithm requires to achieve acceptable performance.
\cref{fig:training_efficiency} (b)-(e) show the validation nDCG@100 and Gini@100 of iALS and fiADMM at each training epoch and training wall time on \ML for the settings of $d\!=\!256$.
In contrast to iALS, fiADMM requires many iterations to achieve high ranking quality (see (b) and (c)).
In \cref{fig:training_efficiency} (d)-(e), fiADMM also requires longer training time to achieve acceptable quality; the right end points in the figures correspond to the models after 50 iterations.
This result demonstrates the less-than-optimal convergence speed of fiADMM.

\noindent\textbf{Convergence Behaviour.}
We examine the convergence of each variable and the training losses in fiADMM with $\lambda_f\!=\!1\mathrm{e}3$ and $d\!=\!256$ on \ML.
Each figure in the top row of \cref{fig:fiadmm_convergence} demonstrates the residual norm of each variable, i.e. $\|\bfV^{k+1} \!-\! \bfV^k\|_F$, $\|\bfU^{k+1} \!-\! \bfU^k\|_F$, $\|\s^{k+1} \!-\! \s^k\|_2$, and $\|\w^{k+1} \!-\! \w^k\|_2$.
The following observations can be made: (1) fiADMM with $\gamma\!=\!0.04$ is more stable than that with $\gamma\!=\!0.05$ in terms of $\|\bfV^{k+1} \!-\! \bfV^k\|_F$ and $\|\bfU^{k+1} \!-\! \bfU^k\|_F$; and (2) small values of $\rho$ (i.e. $\rho\!=\!3\mathrm{e}3,5\mathrm{e}3$) destabilise the behaviour in $\|\s^{k+1} \!-\! \s^k\|_2$ and $\|\w^{k+1} \!-\! \w^k\|_2$.
Because the learning rate $\gamma$ directly affects $\|\bfU^{k+1} \!-\! \bfU^k\|_F$, observation (1) is as expected.
However, the residual norm of each variable is considerably small for all settings of $\rho$ and $\gamma$ regarding the large dimensionality.
Here, the update step of $\w$ is $\w^{k+1}\!=\!\w^{k}+((1/|\calU|)(\bfU^{k+1})^\top \1\!-\!\s^{k+1})$, and hence $\|\w^{k+1} \!-\! \w^k\|_2$ represents the constraint violation $\|\s^{k+1}\!-\!(1/|\calU|)(\bfU^{k+1})^\top \1\|_2$.
Therefore, observation (2) suggests that the constraint $\s\!=\!(1/|\calU|)\bfU^\top \1$ is more strictly satisfied using a larger $\rho$.

\section{Conclusion}
The feasibility of fairness-aware item recommendation is indispensable for solving fairness issues; however, it is rather disregarded in academic research.
Hence, this study aimed to develop a simple and scalable tool for solving immediate unfairness issues in real-world applications.
We proposed fiADMM, a variant of iALS with the item fairness regulariser.
Despite the difficulty of optimising fairness regularisation in parallel,
fiADMM ultimately maintains scalability and provable convergence.
These properties are realised by carefully designing (1) a tractable fairness regulariser based on predicted merit, (2) variable splitting using three-block reformulation, and (3) linearisation and efficient proximal mapping to combat optimisation non-separability among users.
In future, we plan to analyse the convergence rate and further enhance fiADMM in terms of its convergence speed.
It would also be interesting to extend fiADMM to enable the use of a large $d$.

\appendix

\section{Proofs for Theorem \ref{thm:fiADMM}}

\subsection{Proof of Lemma \ref{lem:vu_sub_ub}}
\begin{proof}
From the definition of $L_\rho(\bfV, \bfU, \s^{k}, \w^{k})$, we have:
\begin{align}
\label{eq:vu_sub_eq}
    &L_\rho(\bfV^{k+1}, \bfU^{k+1}, \s^{k}, \w^{k}) - L_\rho(\bfV^{k}, \bfU^{k}, \s^{k}, \w^{k}) \nonumber\\
    &= g(\bfV^{k+1}, \bfU^{k+1}, \s^{k}) + \frac{\rho}{2}\norm{\w^{k} + |\frac{1}{|\calU|}(\bfU^{k+1})^\top\1 - \s^{k}}_2^2 - \frac{\rho}{2} \|\w^{k}\|_2^2 \nonumber\\
    &\phantom{=} - g(\bfV^{k}, \bfU^{k}, \s^{k}) - \frac{\rho}{2}\norm{\w^{k} + \frac{1}{|\calU|}(\bfU^{k})^\top \1 - \s^{k}}_2^2 + \frac{\rho}{2} \|\w^{k}\|_2^2 \nonumber\\
    &= g(\bfV^{k+1}, \bfU^{k+1}, \s^{k}) \!-\! g(\bfV^{k+1}, \bfU^{k}, \s^{k}) \!+\! g(\bfV^{k+1}, \bfU^{k}, \s^{k}) \!-\! g(\bfV^{k}, \bfU^{k}, \s^{k}) \nonumber\\
    &\phantom{=} + \frac{\rho}{2}\norm{\w^{k} \!+\! \frac{1}{|\calU|}(\bfU^{k+1})^\top \1 \!-\! \s^{k}}_2^2 \!-\! \frac{\rho}{2}\norm{\w^{k} \!+\! \frac{1}{|\calU|}(\bfU^{k})^\top\1 \!-\! \s^{k}}_2^2 .
\end{align}
Denoting the Gramian matrix of $\bfU$ by $\bfG_U=\bfU^\top \bfU$,
we have
\begin{align*}
    &\langle \nabla_{\bfV} g(\bfV, \bfU, \s) - \nabla_{\bfV} g(\bfV', \bfU, \s), \bfV - \bfV'\rangle_F \\
    &= \sum_{j \in \calV}\!\Biggl\langle\!\left(\sum_{i \in \calU}r_{i,j}\bu_i\bu_i^{\top} + \alpha_0\bfG_U + \lambda_f\s\s^{\top} + \lambda_V^{(j)}\I\right)\bv_j - \sum_{i \in \calU}r_{i,j}\bu_i \\
    &\phantom{=} - \left(\sum_{i \in \calU}r_{i,j}\bu_i\bu_i^{\top} + \alpha_0\bfG_U + \lambda_f\s\s^{\top} + \lambda_V^{(j)}\I\right)\bv_j' + \sum_{i \in \calU}r_{i,j}\bu_i, \bv_j \!-\! \bv_j' \Biggr\rangle \\
    &= \sum_{j \in \calV}\!\!\left\langle\!\left(\sum_{i \in \calU}r_{i,j}\bu_i\bu_i^{\top}\right)\!(\bv_j\!-\!\bv_j'), \bv_j \!-\! \bv_j' \right\rangle \!+\! \sum_{j \in \calV}\!\!\left\langle \alpha_0\bfG_U (\bv_j\!-\!\bv_j'), \bv_j \!-\! \bv_j' \right\rangle \\
    &\phantom{=} + \sum_{j \in \calV}\left\langle\lambda_f\s\s^{\top}(\bv_j\!-\!\bv_j'), \bv_j \!-\! \bv_j' \right\rangle + \sum_{j \in \calV}\left\langle \lambda_V^{(j)}(\bv_j-\bv_j'), \bv_j \!-\! \bv_j' \right\rangle\\
    &= \sum_{j \in \calV}\sum_{i \in \calU}r_{i,j}\|\bu_i^{\top}(\bv_j\!-\!\bv_j')\|_2^2 + \alpha_0\sum_{j \in \calV}\sum_{i \in \calU}\|\bu_i^{\top}(\bv_j\!-\!\bv_j')\|_2^2 \\
    &\phantom{=} + \lambda_f\sum_{j \in \calV}(\s^{\top}(\bv_j\!-\!\bv_j'))^2 + \sum_{j \in \calV}\| \lambda_V^{(j)}(\bv_j\!-\!\bv_j')\|_2^2\\
    &\geq \underline{\lambda}_V\sum_{j \in \calV}\| \bv_j-\bv_j'\|_2^2 = \underline{\lambda}_V\| \bfV-\bfV'\|_F^2,
\end{align*}
where $\underline{\lambda}_V=\min_{j \in \calV} \lambda_V^{(j)}$.
Thus, the function $g$ is a $\underline{\lambda}_V$-strongly convex function with respect to $\bfV$.
We also have
\begin{align}
    \label{eq:v_sub_ub}
    &g(\bfV^{k+1}\!,\bfU^{k}\!,\s^{k}) \!-\! g(\bfV^{k}\!,\bfU^{k}\!,\s^{k}) \nonumber \\ 
    &\leq \langle \nabla_{\bfV}g(\bfV^{k+1}\!,\bfU^{k}\!,\s^{k})\!, \bfV^{k+1} \!-\! \bfV^{k}\rangle
    \!-\! \frac{\underline{\lambda}_V}{2}\|\bfV^{k+1} \!-\! \bfV^{k}\|_F^2 \nonumber\\
    &=  - \frac{\underline{\lambda}_V}{2}\|\bfV^{k+1} \!-\! \bfV^{k}\|_F^2,
\end{align}
where the last equality follows from the fact that $\bfV^{k+1}$ minimises $g(\bfV, \bfU^{k}, \s^{k})$; hence $\nabla_{\bfV}g(\bfV^{k+1},\bfU^{k}, \s^{k})=0$ holds.
Moreover, since $\bfU^{k+1}$ minimises $(\rho/2)\|\w^{k} + |\calU|^{-1}(\bfU)^\top \1 - \s^{k}\|_2^2 - (\rho/2) \|\w^{k}\|_2^2 + (1/2\gamma)\|\bfU-\bfU^{k}\|_F^2 + \langle \bfU-\bfU^{k}, \nabla_{\bfU} g(\bfV^{k+1}, \bfU^{k}, \s^{k})\rangle_F$:
\begin{align}
    &\frac{\rho}{2}\|\w^{k} + \frac{1}{|\calU|}(\bfU^{k+1})^\top \1 - \s^{k}\|_2^2 - \frac{\rho}{2} \|\w^{k}\|_2^2 + \frac{1}{2\gamma}\|\bfU^{k+1}-\bfU^{k}\|_F^2 \nonumber \\
    &\phantom{\leq} + \langle \bfU^{k+1}-\bfU^{k}, \nabla_{\bfU} g(\bfV^{k+1}, \bfU^{k}, \s^{k})\rangle \nonumber\\
    \label{eq:u_sub_ub}
    &\leq \frac{\rho}{2}\|\w^{k} + \frac{1}{|\calU|}(\bfU^{k})^\top \1 - \s^{k}\|_2^2 - \frac{\rho}{2} \|\w^{k}\|_2^2.
\end{align}
By combining \cref{eq:vu_sub_eq,eq:v_sub_ub,eq:u_sub_ub}, we get:
\begin{align}
\label{eq:vu_sub}
    &L_\rho(\bfV^{k+1}, \bfU^{k+1}, \s^{k}, \w^{k}) - L_\rho(\bfV^{k}, \bfU^{k}, \s^{k}, \w^{k})  \nonumber\\
    &\leq g(\bfV^{k+1}, \bfU^{k+1}, \s^{k}) \!-\! g(\bfV^{k+1}, \bfU^{k}, \s^{k}) \!-\! \langle \bfU^{k+1}\!-\!\bfU^{k}, \nabla_{\bfU} g(\bfV^{k+1}, \bfU^{k}, \s^{k})\rangle_F\nonumber\\
    &\phantom{\leq} - \frac{1}{2\gamma}\|\bfU^{k+1}-\bfU^{k}\|_F^2 - \frac{\underline{\lambda}_V}{2}\|\bfV^{k+1}-\bfV^{k}\|_F^2.
\end{align}
On the other hand, under the assumption in Theorem~\ref{thm:fiADMM}, from Lemma~\ref{lem:g_smooth}, the function $g(\bfV^{k+1},\bfU, \s^{k})$ is a $\sqrt{|\calU|}\left((1 + \alpha_0)C_V + \bar{\lambda}_U\right)$-smooth function with respect to $\bfU$.
Then, we have for any $\bfU,\bfU'$:
\begin{align}
      \nonumber
     &g(\bfV^{k+1}, \bfU', \s^{k}) - g(\bfV^{k+1}, \bfU, \s^{k}) 
     - \langle \nabla_{\s}g(\bfV^{k+1}, \bfU, \s^{k}), \bfU'- \bfU\rangle_F \\
     \label{eq:g_smooth_u}
     &\leq \frac{\sqrt{|\calU|}((1 + \alpha_0)C_V + \bar{\lambda}_U)}{2}\|\bfU-\bfU'\|_F^2.
\end{align}
By combining \cref{eq:vu_sub} and \cref{eq:g_smooth_u}, we get:
\begin{align*}
    &L_\rho(\bfV^{k+1}, \bfU^{k+1}, \s^{k}, \w^{k}) - L_\rho(\bfV^{k}, \bfU^{k}, \s^{k}, \w^{k}) \\
    &\leq \!\frac{\sqrt{|\calU|}((1 \!+\! \alpha_0)C_V \!+\! \bar{\lambda}_U) \!-\! 1/\gamma}{2}\|\bfU^{k+1}\!-\!\bfU^{k}\|_F^2- \frac{\underline{\lambda}_V}{2}\|\bfV^{k+1}-\bfV^{k}\|_F^2.
\end{align*}
\end{proof}

\subsection{Proof of Lemma \ref{lem:s_sub_ub}}
\begin{proof}
Let us define $h^{k}(\s)=g(\bfV^{k+1},\bfU^{k+1},\s) + \frac{\rho}{2}\|\w^{k} + |\calU|^{-1}(\bfU^{k+1})^\top \1 - \s\|_2^2 - \frac{\rho}{2} \|\w^{k}\|_2^2$.
We have:
\begin{align*}
    &\langle \nabla h^{k}(\s) - \nabla h^{k}(\s'), \s - \s'\rangle \\
    &= \langle \nabla_{\s} g(\bfV^{k+1},\bfU^{k+1},\s) -\rho(\w^{k}+\frac{1}{|\calU|}(\bfU^{k+1})^\top \1-\s) \\
    &\phantom{=} - \nabla_{\s} g(\bfV^{k+1},\bfU^{k+1},\s') + \rho(\w^{k} + \frac{1}{|\calU|}(\bfU^{k+1})^\top \1 - \s'), \s - \s'\rangle \\
    &= \langle \nabla_{\s} g(\bfV^{k+1},\bfU^{k+1},\s)  - \nabla_{\s} g(\bfV^{k+1},\bfU^{k+1},\s') + \rho(\s - \s'), \s - \s'\rangle \\
    &\geq \rho\|\s-\s'\|_2^2,
\end{align*}
where the inequality follows from the convexity of $g(\bfV^{k+1},\bfU^{k+1},\s)$.
Thus, $h^{k}$ is a $\rho$-strongly convex function.
Therefore, we have:
\begin{align*}
    &L_\rho(\bfV^{k+1}\!, \bfU^{k+1}\!, \s^{k+1}\!, \w^{k}) - L_\rho(\bfV^{k+1}\!, \bfU^{k+1}\!, \s^{k}\!, \w^{k}) \\
    &= h^{k}(\s^{k+1}) - h^{k}(\s^{k}) \\
    &\leq \langle \nabla h^{k}(\s^{k+1})\!, \s^{k+1}-\s^{k}\rangle - \frac{\rho}{2}\|\s^{k+1}-\s^{k}\|_2^2 = - \frac{\rho}{2}\|\s^{k+1}-\s^{k}\|_2^2,
\end{align*}
where the last equality follows from that $\s^{k+1}$ minimises $h^{k}(\s)$, i.e. $\nabla h^{k}(\s^{k+1})=0$.
\end{proof}

\subsection{Proof of Lemma \ref{lem:w_sub_ub}}
\begin{proof}
From the definition of $L_\rho(\bfV^{k+1}, \bfU^{k+1}, \s^{k+1}, \w)$ and the update rule of $\w^{k}$, we have:
\begin{align}
\label{eq:w_sub_eq}
    &L_\rho(\bfV^{k+1}, \bfU^{k+1}, \s^{k+1}, \w^{k+1}) - L_\rho(\bfV^{k+1}, \bfU^{k+1}, \s^{k+1}, \w^{k}) \nonumber\\
    &= \frac{\rho}{2}\|\w^{k+1} + \frac{1}{|\calU|}(\bfU^{k+1})^\top\1 - \s^{k+1}\|_2^2 - \frac{\rho}{2} \|\w^{k+1}\|_2^2 \nonumber \\
    &\phantom{=} - \frac{\rho}{2}\|\w^{k} + \frac{1}{|\calU|}(\bfU^{k+1})^\top\1 - \s^{k+1}\|_2^2 + \frac{\rho}{2} \|\w^{k}\|_2^2 \nonumber\\
    &= \rho\langle \w^{k+1} - \w^{k}, \frac{1}{|\calU|}(\bfU^{k+1})^\top\1-\s^{k+1}\rangle = \rho\|\w^{k+1}-\w^{k}\|_2^2.
\end{align}
On the other hand, since $\s^{k+1}$ minimises the convex function $h^{k}(\s)$, the first-order optimality condition gives:
\begin{align*}
    \nabla h^{k}(\s^{k+1}) &= \nabla_{\s}g(\bfV^{k+1}, \bfU^{k+1}, \s^{k+1}) - \rho(\w^{k}+\frac{1}{|\calU|}(\bfU^{k+1})^\top\1-\s^{k+1}) \\
    &= \nabla_{\s}g(\bfV^{k+1}, \bfU^{k+1}, \s^{k+1}) - \rho \w^{k+1}= 0.
\end{align*}
Thus,
\begin{align}
\label{eq:w_eq}
    \w^{k+1} = \frac{1}{\rho}\nabla_{\s}g(\bfV^{k+1}, \bfU^{k+1}, \s^{k+1}).
\end{align}
By combining \cref{eq:w_sub_eq}, \cref{eq:w_eq}, and Lemma~\ref{lem:g_smooth}, we have:
\begin{align*}
    &L_\rho(\bfV^{k+1}, \bfU^{k+1}, \s^{k+1}, \w^{k+1}) - L_\rho(\bfV^{k+1}, \bfU^{k+1}, \s^{k+1}, \w^{k}) \\
    &= \frac{1}{\rho}\|\nabla_{\s}g(\bfV^{k+1}, \bfU^{k+1}, \s^{k+1}) - \nabla_{\s}g(\bfV^{k}, \bfU^{k}, \s^{k})\|_2^2 \\
    &= \frac{1}{\rho}\|\nabla_{\s}g(\bfV^{k+1}, \bfU^{k+1}, \s^{k+1}) - \nabla_{\s}g(\bfV^{k+1}, \bfU^{k}, \s^{k}) \\
    &\phantom{=} + \nabla_{\s}g(\bfV^{k+1}, \bfU^{k}, \s^{k}) - \nabla_{\s}g(\bfV^{k}, \bfU^{k}, \s^{k})\|_2^2 \\
    &\leq \frac{1}{\rho}\Bigl(\|\nabla_{\s}g(\bfV^{k+1}, \bfU^{k+1}, \s^{k+1}) \!-\! \nabla_{\s}g(\bfV^{k+1}, \bfU^{k}, \s^{k})\|_2 \\
    &\phantom{\leq} + \|\nabla_{\s}g(\bfV^{k+1}, \bfU^{k}, \s^{k}) \!-\! \nabla_{\s}g(\bfV^{k}, \bfU^{k}, \s^{k})\|_2\Bigr)^2 \\
    \!\!&\leq\!\! \frac{1}{\rho}\!\!\left(\lambda_f\|\bfV^{k+1}\|_F^2\|\s^{k+1}\!-\!\s^{k}\|_2 \!+\! \lambda_f\!\left(\!\|\bfV^{k+1}\|_F \!+\! \|\bfV^{k}\|_F\right)\!\|\bfV^{k+1}\!-\!\bfV^{k}\|_F\|\s^{k}\|_2\!\right)^2 \\
    \!\!&\leq\!\! \frac{3}{\rho}\!\!\left(\lambda_f^2\|\bfV^{k+1}\|_F^4\|\s^{k+1}\!-\!\s^{k}\|_2^2 \!+\! \lambda_f^2\!\left(\!\|\bfV^{k+1}\|_F^2 \!+\! \|\bfV^{k}\|_F^2\right)\!\|\bfV^{k+1}\!-\!\bfV^{k}\|_F^2\|\s^{k}\|_2^2\!\right) \\
    \!\!&\leq\! \frac{3\lambda_f^2}{\rho}\left(C_V^2\|\s^{k+1}\!-\!\s^{k}\|_2^2 + 2C_VC_s\|\bfV^{k+1}-\bfV^{k}\|_F^2\right),
\end{align*}
where the third inequality follows from $(a+b+c)^2\leq 3(a^2 + b^2 + c^2)$ for $a,b,c\in \mathbb{R}$.
\end{proof}

\subsection{Proof of Lemma \ref{lem:L_rho_lb}}
\begin{proof}[Proof of Lemma \ref{lem:L_rho_lb}]
Under the assumption in Theorem~\ref{thm:fiADMM}, from Lemma~\ref{lem:g_smooth}, the function $g(\bfV^{k},\bfU^{k}, \s)$ is a $\lambda_f C_V$-smooth function with respect to $\s$, and then we have for any $\s,\s'$:
\begin{align}
\label{eq:g_smooth_s}
     g(\bfV^{k}\!, \bfU^{k}\!, \s') \!-\! g(\bfV^{k}\!, \bfU^{k}\!, \s) \!-\! \langle \nabla_{\s}g(\bfV^{k}\!, \bfU^{k}\!, \s), \s' \!-\! \s\rangle \leq \frac{\lambda_f C_V}{2}\|\s \!-\! \s'\|_2^2.
\end{align}
By combining \cref{eq:w_eq} and \cref{eq:g_smooth_s}, we have:
\begin{align*}
    &L_\rho(\bfV^{k}, \bfU^{k}, \s^{k}, \w^{k}) \\
&= g(\bfV^{k}\!, \bfU^{k}\!, \s^{k}) \!+\! \frac{\rho}{2}\|\w^{k} + |\calU|^{-1}(\bfU^{k})^\top \1 \!-\! \s^{k}\|_2^2 \!-\! \frac{\rho}{2} \|\w^{k}\|_2^2 \\
&= g(\bfV^{k}\!, \bfU^{k}\!, \s^{k}) \!+\! \rho\langle \w^{k}, |\calU|^{-1}(\bfU^{k})^\top\1\!-\!\s^{k}\rangle \!+\! \frac{\rho}{2}\||\calU|^{-1}(\bfU^{k})^{\top}\1 - \s^{k}\|_2^2\\
&= g(\bfV^{k}\!, \bfU^{k}\!, \s^{k}) \!-\! \langle \nabla_{\s}g(\bfV^{k}\!, \bfU^{k}\!, \s^{k})\!, \s^{k} \!-\! |\calU|^{-1}(\bfU^{k})^{\top}\1\rangle \\
&\phantom{=} \!+\! \frac{\rho}{2}\||\calU|^{-1}(\bfU^{k})^{\top}\1 \!-\! \s^{k}\|_2^2 \\
&\geq g(\bfV^{k}, \bfU^{k}, |\calU|^{-1}(\bfU^{k})^{\top}\1) \!+\! \frac{\rho-\lambda_f C_V}{2}\||\calU|^{-1}(\bfU^{k})^{\top}\1 \!-\! \s^{k}\|_2^2 \\
&\geq \frac{\rho-\lambda_f C_V}{2}\||\calU|^{-1}(\bfU^{k})^{\top}\1-\s^{k}\|_2^2,
\end{align*}
where the last inequality follows from $g(\bfV,\bfU,\s)\geq 0$ for any $\bv$, $\bfU$, and $\s$.
\end{proof}

\subsection{Proof for Lemma \ref{lem:g_smooth}}
\begin{proof}
For fixed $\bfU,\s$, for all $\bfV,\bfV'$, we have the following
\begin{align*}
    &\|\nabla_{\bfV}g(\bfV,\bfU,\s) - \nabla_{\bfV}g(\bfV',\bfU,\s)\|_2 \\
    &=\sum_{j \in \calV}\left\|\left(\sum_{i \in \calU}r_{i,j}\bu_i\bu_i^{\top} + \alpha_0\bfG_U + \lambda_f\s\s^{\top} + \lambda_V^{(j)}\I\right)(\bv_j-\bv_j')\right\|_2 \\
    &\leq \sum_{j \in \calV}\sum_{i \in \calU}\|r_{i,j}\bu_i\bu_i^{\top}(\bv_j-\bv_j')\|_2 + \sum_{j \in \calV}\sum_{i \in \calU}\| \alpha_0\bu_i\bu_i^{\top}(\bv_j-\bv_j')\|_2 \\
    &\phantom{\leq} +  \sum_{j \in \calV}\| \lambda_f\s\s^{\top}(\bv_j-\bv_j')\| + \sum_{j \in \calV}\| \lambda_V^{(j)}(\bv_j-\bv_j')\|_2 \\
    &= \sum_{j \in \calV}\sum_{i \in \calU}r_{i,j}|\bu_i^{\top}(\bv_j-\bv_j')| \cdot \|\bu_i\|_2 + \alpha_0\sum_{j \in \calV}\sum_{i \in \calU}|\bu_i^{\top}(\bv_j-\bv_j')| \cdot \| \bu_i\|_2 \\
    &\phantom{=} + \lambda_f\sum_{j \in \calV}|\s^{\top}(\bv_j-\bv_j')| \cdot \|\s\|_2 + \sum_{j \in \calV}\|\lambda_V^{(j)}(\bv_j-\bv_j')\|_2 \\
    &\leq \sum_{j \in \calV}\sum_{i \in \calU}r_{i,j}\|\bu_i\|_2^2\|\bv_j-\bv_j'\|_2 + \alpha_0\sum_{j \in \calV}\sum_{i \in \calU}\|\bu_i\|_2^2\|\bv_j-\bv_j'\|_2 \\
    &\phantom{\leq} + \lambda_f\sum_{j \in \calV}\|\s\|_2^2\|\bv_j-\bv_j'\|_2 + \bar{\lambda}_V \sum_{j \in \calV}\|(\bv_j-\bv_j')\|_2 \\
    &\leq \left(\left(1 + \alpha_0\right)\sum_{i \in \calU}\|\bu_i\|_2^2 + \|\s\|_2^2 + \bar{\lambda}_V\right)\sum_{j \in \calV}\|\bv_j-\bv_j'\|_2 \\
    &\leq \sqrt{|\calV|}\left(\left(1 + \alpha_0\right)\|\bfU\|_F^2 + \|\s\|_2^2 + \bar{\lambda}_V\right)\|\bfV-\bfV'\|_2,
\end{align*}
where $\bar{\lambda}_V=\max_{j \in \calV} \lambda_V^{(j)}$.
Here, the second inequality follows from the Cauchy-Schwarz inequality.

In addition, we have, for a fixed $\bfV$, for all $\s,\s'$ and $\bfU,\bfU'$,
\begin{align*}
\|\nabla_{\bfU} g(\bfV, \bfU, \s) - \nabla_{\bfU} g(\bfV, \bfU', \s')\|_2 
\leq \sqrt{|\calU|}\left((1 + \alpha_0)\|\bfV\|_F^2 + \bar{\lambda}_U\right)\|\bfU-\bfU'\|_2.
\end{align*}
The derivation is analogous to the case of $\bfV$.

Also, for a fixed $\bfV$, for all $\s,\s'$ and $\bfU,\bfU'$, we have:
\begin{align*}
    \|\nabla_{\s}g(\bfV, \bfU, \s) - \nabla_{\s}g(\bfV, \bfU', \s')\|_2 &= \left\|\lambda_f \bfV^\top\bfV \s-\lambda_f \bfV^\top\bfV\s'\right\|_2 \\
    &= \lambda_f\|\bfV^{\top}\bfV(\s-\s')\|_2 \\
    &\leq \lambda_f\|\bfV^{\top}\bfV\|_F\|\s-\s'\|_2 \\
    &\leq \lambda_f\|\bfV\|_F^2\|\s-\s'\|_2,
\end{align*}
where the first/second inequality follows from Cauchy-Schwarz inequality.
Finally, for fixed $\bfU$ and $\s$, for all $\bfV,\bfV'$, we have:
\begin{align*}
    &\|\nabla_{\s}g(\bfV, \bfU, \s) \!-\! \nabla_{\s}g(\bfV', \bfU, \s)\|_2 \\
    &= \lambda_f\|\bfV^{\top}\bfV\s\!-\!\bfV'^{\top}\bfV'\s\|_2 \\
    &= \lambda_f\|(\bfV^{\top}(\bfV\!-\!\bfV')+(\bfV\!-\!\bfV')^{\top}\bfV')\s\|_2 \\
    &\leq \lambda_f\|\bfV^{\top}(\bfV\!-\!\bfV')\s\|_2+\lambda_f\|(\bfV\!-\!\bfV')^{\top}\bfV'\s\|_2 \\
    &\leq \lambda_f\|\bfV^{\top}(\bfV\!-\!\bfV')\|_F\|\s\|_2+\lambda_f\|(\bfV\!-\!\bfV')^{\top}\bfV'\|_F\|\s\|_2 \\
    &\leq \lambda_f(\|\bfV\|_F + \|\bfV'\|_F)\|\s\|_2\|\bfV\!-\!\bfV'\|_F \\
    &= \lambda_f(\|\bfV\|_F + \|\bfV'\|_F)\|\s\|_2\|\bfV\!-\!\bfV'\|_F,
\end{align*}
where the second/third inequality follows from Cauchy-Schwarz inequality.
\end{proof}
\balance
\bibliographystyle{ACM-Reference-Format}
\bibliography{head.bib}

\end{document}